\newtheorem{theorem}{Theorem}
\newtheorem{corollary}[theorem]{Corollary}
\newtheorem{lemma}[theorem]{Lemma}
\newtheorem{definition}[theorem]{Definition}
\newcommand{\mydriver}{hypertex}
\renewcommand{\mydriver}{pdftex}
\newcommand{\congest}{\textsf{CONGEST}\xspace}
\newcommand{\congc}{\textsf{CONGESTED CLIQUE}\xspace}
\newcommand{\local}{\textsf{LOCAL}\xspace}
\newcommand{\MPC}{\textsf{MPC}\xspace}
\newcommand{\cj}{\ensuremath{c}\xspace}
\newcommand{\Exp}[1]{\mathbf{E}\left[#1\right]}
\newcommand{\Prob}[1]{\mathbf{Pr}\left[#1\right]}	
\newcommand{\nat}{\ensuremath{\mathbb{N}}}
\def\epsilon{\ensuremath{\varepsilon}}
\newcommand{\eps}{\ensuremath{\epsilon}}
\newcommand{\machines}{\ensuremath{M}}
\renewcommand{\machines}{\ensuremath{\mathfrak{M}}}
\newcommand{\spac}{\ensuremath{S}}
\renewcommand{\spac}{\ensuremath{\mathfrak{s}}}
\newcommand{\Gin}{\ensuremath{\mathfrak{G}}}
\newcommand{\nin}{\ensuremath{\mathfrak{n}}}
\newcommand{\nim}{\ensuremath{\mathfrak{m}}}
\newcommand{\qual}{\ensuremath{\mathfrak{q}}}
\newcommand{\COMMENTED}[1]{{}}
\newcommand{\junk}[1]{\COMMENTED{#1}}
\newcommand{\hide}[1]{{}}
\title{\textbf{Simple, Deterministic, Constant-Round Coloring in the Congested Clique}}
\author{\textbf{Artur Czumaj}
	\thanks{Department of Computer Science and Centre for Discrete Mathematics and its Applications (DIMAP), University of Warwick. Email: A.Czumaj@warwick.ac.uk. Research partially supported by the Centre for Discrete Mathematics and its Applications (DIMAP), by a Weizmann-UK Making Connections Grant, by IBM Faculty Award, and by EPSRC award EP/N011163/1.}
	\and
	\textbf{Peter Davies}
	\thanks{Institute of Science and Technology Austria (IST Austria). Email: peter.davies@ist.ac.at.  Research partially supported by the European Union's Horizon 2020 research and innovation programme under the Marie Sk\l odowska-Curie grant agreement No. 754411.}
	\and
	\textbf{Merav Parter}
	\thanks{Weizmann Institute of Science, Rehovot, Israel. Email: merav.parter@weizmann.ac.il. Research partially supported by a Weizmann-UK Making Connections Grant.}
}
\date{}
\begin{document}
\maketitle
\begin{abstract}
We settle the complexity of the $(\Delta+1)$-coloring and $(\Delta+1)$-list coloring problems in the \congc model by presenting a simple \emph{deterministic} algorithm for both problems running in a constant number of rounds. This matches the complexity of the recent breakthrough \emph{randomized} constant-round $(\Delta+1)$-list coloring algorithm due to Chang et al.\ (PODC'19), and significantly improves upon the state-of-the-art $O(\log \Delta)$-round deterministic $(\Delta+1)$-coloring bound of Parter\ (ICALP'18).

A remarkable property of our algorithm is its simplicity. Whereas the state-of-the-art \emph{randomized} algorithms for this problem are based on the quite involved local coloring algorithm of Chang et al.\ (STOC'18), our algorithm can be described in just a few lines. At a high level, it applies a careful derandomization of a recursive procedure which partitions the nodes and their respective palettes into separate bins. We show that after $O(1)$ recursion steps, the remaining uncolored subgraph within each bin has linear size, and thus can be solved locally by collecting it to a single node. This algorithm can also be implemented in the Massively Parallel Computation (\MPC) model provided that each machine has linear (in $\nin$, the number of nodes in the input graph) space.

We also show an extension of our algorithm to the \MPC regime in which machines have \emph{sublinear} space: we present the first deterministic $(\Delta+1)$-list coloring algorithm designed for sublinear-space \MPC, which runs in $O(\log \Delta + \log\log \nin)$ rounds.

\end{abstract}

\section{Introduction}
In this paper, we study deterministic complexity of vertex coloring in two fundamental models of distributed and parallel computation: the \congc and the Massively Parallel Computation models.

The \emph{\congc} model is a classic and prominent model of distributing computing, introduced by Lotker et al.\ \cite{LPP15}. It is a variant of the standard \congest message-passing model that allows all-to-all communication, where in every single round, each node can send $O(\log n)$ bits to each other node.
The \emph{Massively Parallel Computation (\MPC)} model, introduced by Karloff et al.\ \cite{KSV10}, is a nowadays standard theoretical model for parallel algorithms. It shares many similarities to earlier models of parallel computation, but it also allows for unlimited local computation, making it close to various models of distributed computing, e.g., to \congc.

The $(\Delta+1)$-coloring problem and its variations are considered to be corner-stone problems of local computation and are arguably among the most fundamental graph problems in parallel and distributed computing with numerous implications. In the \emph{$(\Delta+1)$-coloring problem}, where $\Delta$ is the maximum degree of the input graph $\Gin = (V,E)$, the goal is to color each node in $\Gin$ with a color in $\{1, \ldots, \Delta+1\}$ such that no two neighbors have the same color. We also consider its generalization, the \emph{$(\Delta+1)$-list coloring problem}, where each node has a (possibly different) palette of $\Delta+1$ colors, and the goal is to compute a legal coloring in which each node is assigned a color from its own palette. Finally, the most restrictive variant of the problem is the \emph{$(\deg+1)$-list coloring problem}, which is the same as the $(\Delta+1)$-list coloring problem except that the size of each node $u$'s palette is $\deg(u)+1$, which might be considerably smaller than $\Delta+1$. It is easy to see that in each of the variations of the $(\Delta+1)$ coloring problem above there is always a legal coloring, and the main task is to understand the complexity of finding the sought coloring.

Our main result is that we can solve the $(\Delta+1)$-list coloring problem in a constant number of rounds of \congc, thereby settling the deterministic complexity of $(\Delta+1)$-coloring and $(\Delta+1)$-list coloring in this model.

\begin{theorem}
	\label{thm:CC-coloring}
	\textbf{Deterministic $(\Delta+1)$-list coloring} can be performed in $\mathbf{O(1)}$ rounds in \congc.
\end{theorem}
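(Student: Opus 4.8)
The plan is to color $\Gin$ by recursively breaking it into independent sub-instances, each eventually small enough to be gathered to a single node and solved there. Throughout, an instance is a graph $G$ with palettes $L(v)$ satisfying $|L(v)|\ge\deg_G(v)+1$ (a $(\deg+1)$-list coloring instance; the input satisfies this with $|L(v)|=\Delta+1$). One \emph{partition step} picks a parameter $k$, splits the color space into pairwise disjoint sets $C_1,\dots,C_k$ and the vertices into $V_1,\dots,V_k$, and produces, for each bin $j$, the sub-instance $G[V_j]$ with palettes $L(v)\cap C_j$. Because the $C_j$ are disjoint and the $V_j$ cover $V(G)$, legal colorings of the $k$ sub-instances concatenate into a legal coloring of $G$ that uses only colors from the original palettes, and every vertex gets colored. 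So it suffices to show: (a) one partition step is implementable deterministically in $O(1)$ \congc rounds so that every sub-instance is again a valid $(\deg+1)$-list coloring instance and has far fewer edges; and (b) after $O(1)$ steps the sub-instances are small enough to finish directly.

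Part (b) is the easy half. Aim for balanced bins, $|C_j|\approx(\Delta+1)/k$ and $|V_j|\approx\nin/k$; then a uniformly random balanced partition leaves, in expectation, at most $\nin\Delta/(2k^2)$ edges inside each $G[V_j]$, and since the deviation events have probability $e^{-\Omega(\nin)}$ this bound holds simultaneously over all $k\le\nin$ bins. With $k=\Theta(\sqrt{\Delta})$ this is already $O(\nin)$ edges per bin after a single step (and if $\Delta=O(1)$, $\Gin$ itself has $O(\nin)$ edges and is gathered outright). To finish, route sub-instance $j$ to node $j$: each node sends the $O(\nin)$ edges and palette entries incident to it, each node $j$ receives the $O(\nin)$ of its own bin, so Lenzen's routing delivers everything in $O(1)$ rounds, and node $j$ computes a greedy coloring of its $(\deg+1)$-list coloring sub-instance.

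The difficult half, and the step I expect to be the main obstacle, is part (a): deterministically producing a partition in $O(1)$ rounds such that $|L(v)\cap C_j|\ge\deg_{G[V_j]}(v)+1$ holds for every $v\in V_j$. A uniformly random balanced partition achieves this only in expectation, and only barely, since it divides each vertex's slack $|L(v)|-\deg_G(v)\ge 1$ by $k$; a union bound over vertices is useless. The fix is to give every vertex a say in its bin — steering $v$ into a bin where its share $|L(v)\cap C_j|$ of the palette is above average and its induced degree $\deg_{G[V_j]}(v)$ is below average by a comfortable margin — and to guide all of these choices by the method of conditional expectations applied to a pessimistic estimator $\sum_{v}\Phi_v$, where $\Phi_v$ is an exponential penalty that becomes large precisely when the bin of $v$ violates the invariant (and whose expectation under the randomized process is below $1$, so that driving it below $1$ yields an explicit partition with no violations). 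Since $\Phi_v$ depends only on $L(v)$, the neighborhood of $v$, and the current partial assignment, each node can evaluate its own term and the conditional expectation of the whole estimator from data gathered in $O(1)$ \congc rounds. The remaining technical work — and the place where the all-to-all bandwidth of \congc is essential — is to commit to the bins of many colors and vertices per round, rather than one at a time, while keeping the estimator a valid upper bound on the failure probability; this batching is what keeps the number of rounds constant.
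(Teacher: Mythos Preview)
Your high-level architecture is the right one, and it matches the paper's: split both vertices and colors into bins, recurse on the bins in parallel, and collect once instances are $O(\nin)$-sized. But your part (a) contains a genuine gap, and the paper resolves it by a different mechanism than the one you sketch.

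You correctly diagnose the obstacle: a uniform random balanced partition into $k$ bins divides the slack $|L(v)|-\deg_G(v)\ge 1$ by $k$, so in expectation each vertex inherits slack $1/k<1$ and the $(\deg+1)$-invariant \emph{fails}, not merely ``barely holds.'' Your proposed fix is to ``steer'' each vertex into a bin where its palette share is above average and its induced degree below average, and to derandomize via a pessimistic estimator $\sum_v\Phi_v$ whose expectation is below $1$. But you never specify the randomized process under which this expectation is below $1$. Under the uniform partition it is not; under any process in which each vertex is biased toward palette-rich bins, the bin choices of neighbors become correlated (they may share palette colors and hence prefer the same bins), so the induced degree of $v$ is no longer a sum of independent indicators and the exponential-penalty calculation breaks down. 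The ``batching'' you allude to at the end does not address this: the difficulty is not the round complexity of conditional expectations, it is that there is no base random experiment with the required expectation to derandomize.

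The paper sidesteps this entirely. It never tries to make every vertex good. Instead it \emph{manufactures slack} by partitioning colors into $k-1$ bins while partitioning vertices into $k$ bins (with $k=\ell^{0.1}$). Then for a vertex $v$ in bin $j<k$, the expected palette share is $p(v)/(k-1)$ while the expected induced degree is $d(v)/k$; the gap is $\Theta(p(v)/k^2)$, which is polynomial in $\ell$ and large enough for a bounded-independence concentration bound (Lemma~\ref{lem:conc}) to certify that the invariant holds except with probability $O(\ell^{-3})$ per vertex. Vertices for which concentration fails are declared \emph{bad} and thrown into a leftover graph $G_0$; the expected number of bad vertices is $O(\nin/\ell^2)$, so $G_0$ has $O(\nin)$ edges and is collected to one machine. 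The $k$-th bin, which received no colors, is handled \emph{sequentially} after bins $1,\dots,k-1$ are colored: its vertices simply delete neighbors' colors from their original palettes, which automatically preserves $p'(v)>d'(v)$. The quantity derandomized by conditional expectations is just the count of bad vertices plus $\nin$ times the count of unbalanced bins, which is a sum of locally computable terms over a $c$-wise independent hash family; no steering and no exponential potentials are needed. Because $k=\ell^{0.1}$ is small, this costs a constant recursion depth (nine levels) rather than your single step with $k=\Theta(\sqrt{\Delta})$, but every step is $O(1)$ rounds.
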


We obtain this by presenting the following linear-space \MPC result, and then applying known reductions between the \congc and \MPC models (cf. Section \ref{subsec:relation-between-models}):

\begin{theorem}
	\label{thm:lMPC-coloring}
	\textbf{Deterministic $(\Delta+1)$-list coloring} can be performed in $\mathbf{O(1)}$ rounds in \MPC with $O(\nin)$ local space per machine and $O(\nin\Delta)$ total space.
\end{theorem}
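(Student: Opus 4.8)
I would first reduce to the $(\deg+1)$-list-coloring version of the problem: truncating each palette $P_v$ to an arbitrary $\deg(v)+1$ of its colors only shrinks the set of legal colorings, so a solution of the truncated instance solves the original. Write $m=|E|\le\nin\Delta$, $\nim=\sum_v|P_v|\le\nin(\Delta+1)$, and $U=\bigcup_vP_v$, and assume $\Delta$ is larger than a constant (otherwise $m=O(\nin)$ and the whole input is already a single-machine base case). The algorithm is a recursive \emph{bin-splitting} procedure with branching parameter $B=\Delta^{c}$ for a small constant $c>0$. One splitting step partitions $U$ into $B$ parts $U_1,\dots,U_B$ and assigns each vertex $v$ a bin $b(v)\in[B]$; the $i$-th sub-instance is $\Gin_i:=\Gin[\{v:b(v)=i\}]$, in which every $v$ keeps the palette $P_v\cap U_{b(v)}$. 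Since the parts $U_i$ are pairwise disjoint, legal colorings of all the $\Gin_i$ combine into a legal coloring of $\Gin$, so it suffices to arrange that (i) each $\Gin_i$ is again a valid $(\deg+1)$-list-coloring instance, i.e.\ $|P_v\cap U_{b(v)}|\ge\deg_{\Gin_{b(v)}}(v)+1$ for all $v$, and (ii) the sizes shrink.

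Next I would analyze the idealized randomized version, with $U_1,\dots,U_B$ a uniformly random partition of $U$ and $b$ uniformly random. Then $\Exp{|P_v\cap U_{b(v)}|}=|P_v|/B$ and $\Exp{\deg_{\Gin_{b(v)}}(v)}=\deg(v)/B$ for every $v$, while $\Exp{|E(\Gin_i)|}=m/B^2$ and $\Exp{\sum_{v\in\Gin_i}|P_v\cap U_i|}=\nim/B^2$ for every bin $i$; hence the total edge count and the total number of retained palette entries each drop by a factor $\Theta(B)$ per step (the $\nin$ vertices are merely distributed among the $B$ bins), and Chernoff bounds give concentration for every vertex and bin whose parameters exceed a polylogarithmic threshold. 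Iterating $k=\lceil 1/(2c)\rceil=O(1)$ steps produces $B^{k}=\Delta^{O(1)}=O(\Delta)$ sub-instances, each of total size $O(\nin)$ once $B^{2k}\ge\Delta$. The base case then routes each sub-instance to its own machine --- $O(\nin)$ local and $O(\nin\Delta)$ total space suffice --- and colors it greedily in an arbitrary vertex order, which works because a valid $(\deg+1)$-list-coloring instance always admits a proper coloring; no further communication is needed.

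The crux is establishing (i) and (ii). Property (ii) is the Chernoff computation above together with a separate bucket for the below-threshold vertices and sub-instances; in a $(\deg+1)$-instance a low-degree vertex also has a small palette, so that bucket has total size $O(\nin\cdot\polylog\nin)$, absorbed into $O(1)$ extra levels (or the threshold can simply be taken constant). Property (i) is the main obstacle: under a uniform split the surplus $|P_v\cap U_{b(v)}|-\deg_{\Gin_{b(v)}}(v)$ has expectation only $1/B$, so a degree-$\Delta$ vertex is \emph{not} kept valid with high probability and a union bound is hopeless. What rescues us is that a valid split nevertheless exists: for any fixed color partition and any fixed bins of $v$'s neighbors, $\sum_{i}\bigl(|P_v\cap U_i|-\deg_{\Gin_i}(v)\bigr)=|P_v|-\deg(v)\ge1$ and each summand is an integer, so some bin is good for $v$; the real difficulty is choosing all the $b(v)$'s consistently, since committing $v$ to a bin can spoil a neighbor already placed there. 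That is exactly what derandomization must deliver.

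Hence the last and most delicate step is the derandomization. I would build a pessimistic estimator: a weighted sum, over all vertices and bins, of Chernoff-type exponential penalties for the bad events of (i) and (ii), normalized so that its value under the fully random choice is below $1$. The estimator decomposes into terms each depending only on a single vertex (with its neighborhood) or on a single bin, so after one aggregation round its conditional expectation is computable in $O(1)$ \MPC rounds on machines of $\Omega(\nin)$ space; fixing the random bits by the method of conditional expectations without ever increasing it then yields a deterministic split that is simultaneously valid and size-shrinking, in the style of earlier linear-space \MPC derandomizations. Running this for the $O(1)$ recursion levels and finishing with the local base case gives the claimed deterministic $O(1)$-round algorithm. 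I expect the real care to be needed precisely in making the estimator certify property (i) for \emph{every} scarce-surplus (near-degree-$\Delta$) vertex rather than for all but a few of them.
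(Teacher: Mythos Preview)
Your proposal has a genuine gap at exactly the place you flag as ``the real care.'' You correctly observe that under a symmetric random split (colors into $B$ bins, vertices into $B$ bins) the expected surplus $|P_v\cap U_{b(v)}|-\deg_{\Gin_{b(v)}}(v)$ is only $1/B$, so for a tight vertex the bad event has probability $\Theta(1)$. But then a Chernoff-type pessimistic estimator for property~(i) \emph{cannot} have value below $1$ under the random choice: any such estimator upper-bounds the failure probability vertex by vertex, and summing $\Theta(1)$ over $\nin$ vertices gives $\Theta(\nin)$, not $<1$. The method of conditional expectations only finds outcomes at least as good as the expectation; it cannot locate exponentially rare good outcomes, and your pigeonhole argument (one good bin exists for each $v$ given the rest) is a per-vertex existence statement that does not establish global existence, let alone a constructive potential. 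So the derandomization as you describe it cannot deliver property~(i), and this is the heart of the theorem.

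The paper resolves this with two ideas you explicitly rule out. First, it uses an \emph{asymmetric} split: vertices are hashed into $\ell^{0.1}$ bins but colors into only $\ell^{0.1}-1$ bins. This manufactures slack: a vertex in one of the first $\ell^{0.1}-1$ bins has expected palette $p(v)/(\ell^{0.1}-1)$ versus expected degree $d(v)/\ell^{0.1}$, a gap of order $p(v)/\ell^{0.2}\ge \ell^{0.8}$, which is polynomially large and admits bounded-independence concentration. The leftover bin $\ell^{0.1}$ receives no hashed colors and is colored \emph{sequentially after} the others, using whatever colors remain in each vertex's palette; since colors are removed only for already-colored neighbors, $p'(v)>d'(v)$ holds automatically there. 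Second, the paper does \emph{not} insist that every vertex be good: it shows each vertex is bad with probability $O(\ell^{-3})$, so the expected number of bad vertices is $O(\nin/\ell^{2})$, and the cost function fed to conditional expectations is simply the bad-vertex count (plus a bad-bin term). The resulting bad-vertex graph $G_0$ then has at most $\nin/\ell^{2}$ vertices of degree at most $\ell+\ell^{0.7}$, hence size $O(\nin)$, and is colored on one machine. Your initial reduction to $(\deg+1)$-list coloring is also counterproductive here: it discards exactly the palette surplus $p(v)>\ell$ that the asymmetric-binning trick exploits.
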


We note that for general $(\Delta+1)$-list coloring, the input size is $\Theta(\nin\Delta)$ and therefore the global \MPC space bound in Theorem \ref{thm:lMPC-coloring} is optimal. However, for the special case of $(\Delta+1)$-coloring, where initial palettes are all $\{1, \ldots, \Delta+1\}$ (and so need not be specified as input), the input size is $\Theta(\nim+\nin)$. We remark that we can also obtain optimal global space in this case.

\begin{theorem}
	\label{thm:MPC-coloring-congc}
	\textbf{Deterministic $(\Delta+1)$-coloring} can be performed in $\mathbf{O(1)}$ rounds in \MPC with $O(\nin)$ local space per machine and $O(\nim+\nin)$ total space.
\end{theorem}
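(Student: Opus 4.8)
The plan is to reduce the $(\Delta+1)$-coloring problem to the general $(\Delta+1)$-list coloring problem and invoke Theorem \ref{thm:lMPC-coloring}, but the issue is that naively writing down each node's list $\{1,\dots,\Delta+1\}$ costs $\Theta(\nin\Delta)$ total space, which violates the $O(\nim+\nin)$ bound we are aiming for. So the crux is to run the linear-space algorithm of Theorem \ref{thm:lMPC-coloring} \emph{without} ever materializing these palettes. First I would recall the structure of that algorithm: it proceeds by $O(1)$ recursive steps, at each step distributing the nodes and their palette colors into separate bins via a (derandomized) hash/partition, until within each bin the remaining uncolored subgraph together with the surviving colors has total size linear in its number of nodes, at which point the bin is collected to a single machine and solved locally. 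The key observation to exploit is that as long as all nodes still share the common palette $\{1,\dots,\Delta+1\}$, a bin is completely described by (i) the set of nodes assigned to it, (ii) the induced subgraph on those nodes, and (iii) an implicit description ``the colors in sub-interval $I$'' of the palette-block assigned to the bin; none of these requires $\Theta(\Delta)$ space per node.

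The main steps, in order, are as follows. First, partition the $\Delta+1$ colors into the same color-blocks that the derandomized procedure of Theorem \ref{thm:lMPC-coloring} would use; since the palettes are identical, this partition is a single global object of size $O(\Delta) = O(\nim+\nin)$ (using $\nim \ge \nin\Delta/2$ up to isolated vertices, or simply $\Delta \le \nin$), and each node inherits the sub-block corresponding to its bin. Store for each node only a pointer to its current bin and the interval endpoints of its current color-block, so the per-step state is $O(\nin)$ plus $O(\nim)$ for the (shrinking) edge set. Second, run the $O(1)$ recursion steps exactly as in Theorem \ref{thm:lMPC-coloring}, maintaining the invariant that within each bin the colors still form a contiguous interval common to all its nodes — this holds because the bin-partition acts on color \emph{indices}, not on heterogeneous lists. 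Third, observe that the termination condition of Theorem \ref{thm:lMPC-coloring} — that each bin's uncolored subgraph has $O(1)$ times as many surviving colors as nodes — is unaffected: within a bin of $n'$ nodes the surviving interval has $O(n')$ colors, so collecting the bin costs $O(n' + (\text{edges in bin}))$ space, summing to $O(\nin+\nim)$. Finally, solve each collected bin locally (greedy $(\deg+1)$-list coloring suffices, since a node of surviving degree $d$ in the bin has at least $d+1$ surviving colors by the same invariant that drives Theorem \ref{thm:lMPC-coloring}), and output the union of the colorings.

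The step I expect to be the main obstacle is verifying that the derandomization machinery underlying Theorem \ref{thm:lMPC-coloring} — the choice of hash functions or the method of conditional expectations over the palette partition — can itself be carried out in $O(\nim+\nin)$ total space rather than $O(\nin\Delta)$. In the list-coloring setting the derandomization may read all palettes when evaluating its pessimistic estimator; here I would argue that because all palettes coincide, the relevant statistics (for each color-block, how many neighbors of a given node retain a color in that block) depend only on the \emph{graph} and the \emph{block boundaries}, both of which fit in $O(\nim+\nin)$ space, so the estimator can be evaluated and the partition fixed within budget. Once this bookkeeping is in place, correctness and the $O(1)$ round count are inherited verbatim from Theorem \ref{thm:lMPC-coloring}, and the only remaining point is the routine check that all intermediate objects — bin assignments, interval endpoints, induced subgraphs, and the final per-bin instances — never exceed $O(\nin)$ per machine or $O(\nim+\nin)$ in total.
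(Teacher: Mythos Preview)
Your proposal has a real gap: the invariant you rely on---that within each bin the surviving colors form a contiguous block \emph{common to all nodes}---does not survive the recursion. Look back at \textsc{ColorReduce}: besides the bins $G_1,\dots,G_{\ell^{0.1}-1}$ that receive color blocks from $h_2$, there is an \emph{overflow} bin $G_{\ell^{0.1}}$ that receives no colors from the partition at all. After its siblings are colored, each node $v\in G_{\ell^{0.1}}$ has its palette updated by deleting the particular colors that $v$'s own neighbors happened to choose. These deletions are node-specific, so the palettes inside $G_{\ell^{0.1}}$ are heterogeneous, and this recurs at every level. Your claim that ``all palettes coincide'' is therefore false after the very first overflow update, and with it the argument that the pessimistic estimator for the derandomization depends only on the graph and block boundaries. (The overflow bin is not optional: the gap between splitting colors into $\ell^{0.1}-1$ parts and nodes into $\ell^{0.1}$ parts is exactly what produces the surplus $p'(v)>d'(v)$ in Lemma~\ref{lem:badnode2}.) Note also that the paper's color partition is by a $c$-wise independent hash $h_2$, not by contiguous intervals; once palettes are heterogeneous, an interval partition gives no concentration on how many of a node's remaining colors land in its bin.

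The paper's actual fix (Section~\ref{subsec:reducing-space-for-Delta+1-coloring-on-MPC}) is different in kind from yours. It represents each node's current palette as the conjunction of two pieces: the $O(1)$ hash functions applied so far (each $O(\log\nin)$ bits, stored implicitly), and an \emph{explicit} per-node list of colors already used by neighbors, which costs at most $d(v)$ entries per node and hence $O(\nim)$ in total. A machine holding $v$ can then test any $\gamma\in[\Delta+1]$ for palette membership and count $p'(v)$ without ever materializing the palette, so the conditional-expectations estimator is computable within the space budget even when palettes differ across nodes. The only remaining accounting is that each of the $O(1+m/\nin)$ machines in a \textsc{Partition} call must hold the $O(1)$ prior hash functions; since every call that actually reaches \textsc{Partition} has $\Omega(\nin)$ edges (otherwise it is collected and solved locally), there are at most $O(\sqrt{\nin})$ concurrent calls and the total hash-function storage is $O(\nim+\nin)$.
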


Finally, we extend our approach to the low-space \MPC regime, and provide the first $(\deg+1)$-list coloring (and therefore $(\Delta+1)$-list coloring and $(\Delta+1)$-coloring) algorithm designed for that model.

\begin{theorem}
	\label{thm:MPC-coloring}
	For any constant $\eps>0$, \textbf{deterministic $(\deg+1)$-list coloring} can be performed in \MPC in $O(\log\Delta + \log\log\nin)$ rounds, using $O(\nin^{\eps})$ space per machine and $O(\nim+\nin^{1+\eps})$ total space.
\end{theorem}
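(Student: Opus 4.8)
The plan is to combine the recursive "bin-partitioning" idea used for the linear-space results with a standard graph-exponentiation / virtual-graph technique to cope with the fact that machines now hold only $O(\nin^\eps)$ space, so no single machine can see a vertex's entire neighborhood, let alone a linear-size subgraph. First I would recall the known deterministic low-space \MPC primitives that we are allowed to invoke: sorting, prefix sums, and broadcast/aggregation over all the data in $O(1)$ rounds when total space is $O(\nin^{1+\eps})$ (these are the workhorses of the sublinear regime). Using these, one round of "inspecting and updating palettes" — removing colors taken by already-colored neighbors, recomputing $\deg(u)+1$ budgets, redistributing a vertex's adjacency list and palette across $O(\nin^\eps/\text{poly})$ machines — can be carried out deterministically in $O(1)$ rounds. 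The key structural fact we inherit from the earlier (linear-space) analysis is that the recursive palette-splitting reduces, in $O(1)$ levels, any instance to instances whose \emph{uncolored} subgraph has size \emph{linear in its own vertex count}; the obstacle here is purely that "linear" may still be $\Theta(\nin)$, too big for one machine.

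Next I would handle the base case. Once a sub-instance has $n'$ vertices and $O(n')$ edges but $n' > \nin^\eps$, we cannot simply gather it to a node. Instead I would recurse on the \emph{size}: partition such a linear-size instance again, either by the same palette-splitting recursion applied to the residual graph, or by a size-halving argument, so that after $O(\log\log \nin)$ further levels every surviving piece fits in $O(\nin^\eps)$ space and is solved by brute force on one machine. This is exactly where the $\log\log\nin$ term in the round bound comes from: each halving of the (already linear) instance size costs $O(1)$ rounds, and we need $O(\log(\nin/\nin^\eps)) = O(\log\nin)$ — but with a doubly-exponential shrinkage (graph exponentiation, squaring the "radius" of what each machine simulates) this collapses to $O(\log\log\nin)$. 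I would make the shrinkage doubly-exponential by having machines simulate the coloring decisions of geometrically growing clusters of vertices, using the total-space slack $\nin^{1+\eps}$ to store the $O(1)$-round transcripts.

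The $\log\Delta$ term comes from the palette-splitting recursion itself: in the low-space model we cannot afford the "split into $\Theta(\sqrt{\Delta})$ bins and finish in $O(1)$ levels" shortcut of the linear-space proof, because each level's palette bookkeeping is no longer free; instead we split into a \emph{constant} number of bins per level (say halving the palette), so it takes $\Theta(\log\Delta)$ levels to drive the palette size down to the point where the residual graph is small. Throughout, I would verify the two invariants that the earlier analysis establishes deterministically — every uncolored vertex still has a palette strictly larger than its residual degree ($(\deg+1)$-feasibility is preserved under the split), and the uncolored edge count drops by the required factor at each level — and check that each invariant-preserving step is implementable with sorting/aggregation in $O(\nin^\eps)$ space. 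I expect the main obstacle to be the careful accounting that keeps total space at $O(\nim + \nin^{1+\eps})$: the graph-exponentiation step naively blows up stored adjacency information, so one must argue that only $O(1)$-round decision transcripts (not full $k$-hop neighborhoods) need to be materialized, and that these are of size $O(\nin^{1+\eps})$ summed over all machines. Once that bookkeeping is in place, composing the $O(\log\Delta)$ palette-splitting levels with the $O(\log\log\nin)$ size-reduction levels, each $O(1)$ rounds, yields the claimed $O(\log\Delta + \log\log\nin)$-round deterministic bound.
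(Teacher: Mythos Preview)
Your proposal misidentifies where the $O(\log\Delta + \log\log\nin)$ rounds come from, and this points to a genuine gap. You attribute the $\log\Delta$ term to running the palette-splitting recursion for $\Theta(\log\Delta)$ levels with a constant number of bins per level, and the $\log\log\nin$ term to a doubly-exponential ``graph exponentiation'' shrinkage of residual instances. Neither is what the paper does, and neither is clearly workable as stated.

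In the paper, the recursion still has $O(1)$ depth even in low space: each call to \textsc{LowSpacePartition} splits into $\nin^\delta$ bins (not a constant number), so the maximum degree drops by a factor of roughly $\tfrac12\nin^\delta$ per level, and after $\log_{\nin^\delta}\Delta = O(1)$ levels all remaining nodes have degree at most $\nin^{7\delta}$. Your objection that ``each level's palette bookkeeping is no longer free'' is precisely the obstacle the paper addresses, but not by retreating to constant-size splits. Instead, since a node's neighborhood and palette may not fit on one machine, the paper partitions each node's neighbor list and palette into chunks of size $\Theta(\nin^{7\delta})$ across several machines, defines \emph{good machines} (rather than good nodes) by local concentration conditions on those chunks, and carries out the method of conditional expectations with cost function $\qual(h_1,h_2) = |\{\text{bad machines}\}|$. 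With sufficiently large constant independence the expected number of bad machines is below $1$, so derandomization yields hash functions under which \emph{no} machine is bad; summing the per-chunk guarantees then recovers the needed per-node bounds $d'(v) < 2d(v)\nin^{-\delta}$ and $p'(v) > d'(v)$. This device---moving the good/bad predicate from nodes down to $O(\nin^{O(\delta)})$-size chunks so that it is locally checkable---is the key idea your proposal is missing.

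The entire $O(\log\Delta + \log\log\nin)$ round cost then arises not from the recursion but from the base case: once every node has degree at most $\nin^{7\delta}$, the paper applies Luby's standard reduction from $(\deg+1)$-list coloring to MIS and invokes the deterministic low-space \MPC MIS algorithm of \cite{CDP19}, which itself runs in $O(\log\Delta + \log\log\nin)$ rounds. Your ``graph exponentiation / transcript simulation'' sketch for the base case is both unnecessary and underspecified; in particular it is unclear how one would deterministically \emph{color} (rather than merely gather information) via exponentiation while keeping total space at $O(\nim + \nin^{1+\eps})$.
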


\subsection{Models of computation: \congc and \MPC}

In the \congc model of distributed computation, our input is a graph $\Gin$ with $\nin$ nodes and $\nim$ edges, and the aim is to solve some graph problem by performing computation at the \emph{nodes of the input graph}. Each nodes' initial input consists of its adjacent edges; we then proceed in synchronous \emph{rounds}, in which each node can perform some local computation, and then send $O(\log \nin)$-bit messages to every other node (i.e., unlike in the related \local and \congest models, communication is \emph{not} restricted to the edges of the input graph). We require all nodes to eventually output their local part of the problem's solution (in the case of coloring problems, this means the color they have chosen), and our aim is to minimize the number of communication \emph{rounds} required to do so.

The Massively Parallel Computation (\MPC) model is a parallel computing model first introduced by Karloff et al. \cite{KSV10}. It recently gained a significant popularity as it provides a clean abstraction of a number of massively parallel computation frameworks extensively used in applications, such as MapReduce~\cite{DG04,DG08}, Hadoop~\cite{White12}, Dryad~\cite{IBYBF07}, or Spark~\cite{ZCFSS10}. In the \MPC model, there are $\machines$ machines, each with $\spac$ available space. The input graph $\Gin$ is initially distributed arbitrarily across the machines (i.e., we need at least enough machines so that the total space $\machines\cdot\spac$ is at least the input size). We progress again in synchronous rounds, in which each machine can perform some local computation, and then send a message to each other machine. Unlike in \congc, we do not bound the size of each message; we instead require that the total information sent and received by each machine in each round must fit into the machine's local space (i.e., is of size at most $\spac$). This in particular implies that the total communication of the \MPC model is bounded by $\machines\cdot\spac$ in each round. At the end of the computation, machines must collectively output the solution; any machine can output any part so long as the total output is a complete and consistent solution. Our aim is again to minimize the number of communication rounds; we also now have local space $\spac$ and total space $\machines\cdot\spac$ as model parameters. Space regimes of particular interest are the linear-space regime (where $\spac = \Theta(\nin)$), and the low-space regime ($\spac = \Theta(\nin^\epsilon)$ for any constant $\epsilon \in (0,1)$). Here we measure space in terms of $O(\log \nin)$-bit words, so that $O(\nin)$ space is sufficient to store e.g. a node's neighborhood.
\subsection{Relationship between models}
\label{subsec:relation-between-models}

The two models we study here have their provenance in different fields: \congc from distributed computing, and \MPC from parallel computing. Because of this, \congc is stated as a message-passing model, with the computation performed by \emph{nodes of the input graph}, whose initial input consists of the adjacent edges to that node (but communication is \emph{not} restricted to the input graph, and can be between any pair of nodes). In \MPC, on the other hand, the computation is performed by \emph{machines} that are not associated with any particular part of the input graph, and indeed the input is initially distributed arbitrarily across machines.

It has been noted, though, that (under certain conditions on algorithms) \congc is equivalent to \MPC using $O(\nin)$ space per machine and $O(\nin^2)$ total space \cite{BDH18}. In particular, due to the constant round routing algorithm of Lenzen \cite{Lenzen13}, information can be redistributed essentially arbitrarily, so there is no need to associate the computational entities with nodes in the input graph. This is in stark contrast to the related \local and \congest distributed models, in which the link between computation and input graph locality is integral.

Since we give results both for \congc and low-space \MPC, we will adopt the \MPC perspective, which can accommodate both. That is, we will consider our computation to be done on \emph{machines} with \spac~ space, and give results for the linear-space regime ($\spac = \Theta(\nin)$) which directly apply also to \congc, and then for the low-space regime ($\spac = \Theta(\nin^\epsilon)$).

\subsection{Related work}
Coloring in \congc is a problem which has received considerable recent attention, with a succession of results on randomized algorithms:
\subsubsection{State of the art for randomized coloring algorithms.}


The starting point for the current $(\Delta+1)$ coloring randomized \congc algorithms is the following: provided that the maximum degree $\Delta=O(\sqrt{\nin})$, nodes can efficiently simulate the \local $(\Delta+1)$-list coloring algorithm of Chang et al.\ (CLP) within $O(\log^* \Delta)$ rounds. Building upon this observation, Parter \cite{Parter18} showed that the degree of the nodes can be reduced by applying $O(\log\log\Delta)$ steps of vertex partitioning. At each recursive partitioning step, the current graph was split into $poly(\Delta)$ vertex-disjoint subgraphs plus one additional left-over subgraph. These vertex-disjoint subgraphs were colored recursively in parallel, and the left-over subgraph was colored only once the coloring of the other subgraphs in its recursion level was complete. This approach led to an $O(\log\log\Delta \cdot \log^*\Delta)$-round algorithm. Parter and Su \cite{ParterSu18} improved this complexity to $O(\log^* \Delta)$, by modifying the internal details of the CLP algorithm to allow it to be simulated when $\Delta=\nin^{1/2+o(1)}$. In a recent breakthrough, Chang et al.\ \cite{chang2019complexity} presented a $O(1)$ round procedure which was obtained by
\begin{inparaenum}[(i)]
	\item modifying the recursive partitioning of \cite{Parter18}, and
	\item modifying aspects of the CLP algorithm.
\end{inparaenum}
The key innovative part in their recursive partitioning was to partition both the vertices and the \emph{colors} in their respectability into independent coloring instances. This new partitioning idea was useful to support the more general $(\Delta+1)$-list coloring problem, as well as to provide efficient implementation low-space \MPC, where (upon incorporating the subsequent polylogarithmic-round network decomposition algorithm of Rozho{\v{n}} and Ghaffari \cite{rozhon2019polylogarithmic}), the algorithm of Chang et al. \ \cite{chang2019complexity} works in $O(\log\log\log \nin)$ rounds. This round complexity is matched by a conditional lower bound due to Ghaffari et al. \cite{GKU19}, which states that an $o(\log\log\log \nin)$-round randomized component-stable coloring algorithm in low-space \MPC would imply a $2^{\log^{o(1)}\log \nin}$-round deterministic \local coloring algorithm, and a $\log^{o(1)}\log \nin$-round randomized one. (Note, though, that our algorithms are \emph{not} component-stable, since they involve global agreement on hash functions.)

\subsubsection{State of the art for deterministic coloring algorithms.}

Censor-Hillel et al. \cite{CPS17} presented a quite general scheme for derandomization in the \congc model by combining the methods of bounded independence with efficient computation of the conditional expectation.  Via a reduction to the maximal independent set (MIS) problem, they gave a $(\Delta+1)$-coloring algorithm that runs in $O(\log\Delta)$ rounds provided that $\Delta = O(\nin^{1/3})$. Parter \cite{Parter18} showed a deterministic $(\Delta+1)$-coloring in $O(\log \Delta)$ rounds, that works for any value of $\Delta$. Barenboim and Khazanov \cite{barenboim2018distributed} presented improved deterministic local algorithms as a function of the graph's \emph{arboricity}. Finally, concurrently with our work, Bamberger et al.\ \cite{BKM20} give $(\deg+1)$-list coloring algorithms requiring $O(\log\Delta\log\log\Delta)$ rounds in \congc, $O(\log^2 \Delta)$ in linear-space \MPC, and $O(\log^2 \Delta+\log \nin)$ in low-space \MPC. These bounds are significantly weaker than ours, but we note that the low-space \MPC algorithm of \cite{BKM20} has the advantage of using optimal $O(\nim+\nin)$ global space, compared to our $O(\nim + \nin^{1+\eps})$, and that for \congc and linear-space \MPC we present algorithms only for $(\Delta+1)$-list coloring, rather than $(\deg+1)$-list coloring.

Prior to this, we are aware of no existing deterministic low-space \MPC algorithms for the $(\Delta+1)$-list coloring problem, though for the special case of $(\Delta+1)$-coloring one can simulate the PRAM algorithm of Han \cite{Han96} in $\tilde O(\log^2\nin)$ rounds.

\subsection{Our approach}

We present a simple, deterministic, constant-round \congc algorithm for $(\Delta+1)$-list coloring, by carefully derandomizing a recursive partitioning procedure similar to those used by Parter \cite{Parter18} and Chang et al.\ \cite{chang2019complexity}.

The first step in our approach is to show a randomized procedure which maps nodes of the input graph and colors from nodes' palettes into bins. We show that if we leave only one of the bins without colors, most nodes in the remaining bins receive more colors from their palette into their bin than neighbors. This means that the graphs induced by the nodes in each bin can be colored recursively; we must also show that we can separately color the nodes in the bin which did not receive colors, and the nodes which did not receive more colors than neighbors (which can happen with small probability).

Then, we show that this randomized binning procedure requires only bounded independence. This means a \emph{small random seed} suffices to provide all of the necessary randomness. We can then apply a distributed implementation of the \emph{method of conditional expectations} in order to \emph{deterministically} select such a seed that performs well.

The final step is to show that after each recursive application of this procedure, we significantly reduce the size of the graph induced by the nodes in each bin. In fact, after only \emph{nine} recursive calls, we reduce the size of each instance from $O(\nin\Delta)$ to $O(\nin)$, at which point it is small enough to simply collect onto a single machine and solve locally.

In low-space \MPC only this final step must be altered; since now machines have only $O(\nin^\epsilon)$ space, we cannot solve instances locally and instead apply a reduction to maximal independent set (MIS) in order to use an existing  $O(\log\Delta + \log\log \nin)$-round MIS algorithm of Czumaj et al.\ \cite{CDP19}.

\section{Preliminaries}
In this section we detail some existing fundamentals on which our work relies.

\subsection{Communication in \congc and \MPC}
\label{subsec:communication-in-fully-scalable-MPC}

As noted in \cite{CDP19}, prior work on MapReduce, a precursor to \MPC, facilitates efficient deterministic communication in \MPC, so long as the information sent and received by each machine adheres to the relevant space bounds of the model. In particular, the following results from \cite{GSZ11} provide all the communication primitives we require (and use of features of MapReduce which are common to \MPC):

\begin{lemma}[\cite{GSZ11}, as stated in \cite{CDP19}]
	\label{lem:comm}
	For any positive constant $\delta$, sorting and computing prefix sums of $\nin$ numbers can be performed deterministically in MapReduce (and therefore in \MPC and \congc) in $O(1)$ rounds using $\spac = \nin^\delta$ space per machine and $O(\nin)$ total space.
\end{lemma}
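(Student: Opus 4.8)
The plan is to prove the two claims in tandem: prefix sums first, as a warm‑up, and then sorting, using prefix sums as a subroutine. In both cases the engine is the same --- a branching factor of $\Theta(\spac)$ (for prefix sums), respectively $\Theta(\sqrt{\spac})$ (for sorting), shrinks the problem by a $\nin^{\Theta(\delta)}$ factor every $O(1)$ rounds, so that $O(\log_{\spac}\nin)=O(1/\delta)$ levels suffice, which is $O(1)$ for constant $\delta$. Throughout we use that there are $\machines=\Theta(\nin/\spac)$ machines and that each may send and receive up to $\spac$ words per round, so any routing in which every machine is the source and destination of $O(\spac)$ words is legal and the total space stays $O(\nin)$.

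\emph{Prefix sums.} Assume the numbers are laid out $\spac$ per machine, the $i$-th machine holding indices in the $i$-th block (if not, one index-based balanced redistribution achieves this in $O(1)$ rounds --- no sorting is needed, since indices are explicit). Impose the fixed balanced rooted tree of arity $\Theta(\spac)$ on the $\machines$ machines; since it depends only on $\nin$ and $\spac$, every machine knows it without communication. Each leaf machine first computes the sum of its own $\spac$ numbers. In an up-sweep of $O(\log_{\spac}\nin)=O(1/\delta)$ rounds, each internal machine gathers the $\le\spac$ subtree-sums of its children and adds them; in the matching down-sweep, each machine passes to each child the sum of all values lying strictly to the left of that child's subtree. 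Finally each leaf machine, now knowing the sum of everything to its left, computes the prefix sums of its own $\spac$ numbers locally. Every message carries one number per child, so the space bound is met.

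\emph{Sorting.} Run a deterministic sample-sort with branching factor $p=\Theta(\sqrt{\spac})=\nin^{\Theta(\delta)}$ and recursion depth $O(\log_p\nin)=O(1/\delta)$. In one phase: (i) each machine sorts its $\le\spac$ items locally; (ii) each machine extracts a regularly spaced sub-sample of them; (iii) the samples are gathered, sorted, and thinned to $p-1$ global splitters, which are broadcast to all machines; (iv) each machine partitions its items among the $p$ buckets induced by the splitters, a prefix-sums computation (as above) determines the bucket sizes and an explicit balanced layout, and the items of each bucket are routed to their designated contiguous block of machines; (v) recurse on each bucket in parallel. The classical \emph{regular sampling} guarantee ensures every bucket has $O(\nin/p)$ items, so bucket sizes shrink by a $\nin^{\Theta(\delta)}$ factor per phase and fall below $\spac$ after $O(1/\delta)$ phases, at which point each surviving bucket is sorted on a single machine. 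Ranks of the $\nin$ elements then follow from one further prefix-sums pass (prefix sum of the all-ones vector over the sorted order), and any rank-based reordering is one more balanced routing.

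\emph{Main obstacle.} The delicate steps are (iii) and the routing in (iv). When $\spac$ is a small polynomial in $\nin$, the full set of $\Theta(\machines\,p)$ samples need not fit on one machine, so the splitters themselves must be selected by a \emph{hierarchical} procedure --- a gather-and-thin tree of arity $\Theta(\spac)$, i.e.\ essentially another instance of the same recursion --- while preserving the size bound; this bootstrapping is the technical heart of \cite{GSZ11}. One must also verify that the bucket-to-machine assignment in (iv) never overloads a machine past $\spac$: this follows from the $O(\nin/p)$ bucket-size bound together with the prefix-sums computation, which lets the machines agree, deterministically and without congestion, on a balanced placement, after which a standard balanced-routing argument (in the spirit of Lenzen's scheme, already used in Section~\ref{subsec:relation-between-models}) completes the round. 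Once these two points are settled, the round count is dominated by the $O(1/\delta)$ recursion levels, giving $O(1)$ rounds for every constant $\delta$.
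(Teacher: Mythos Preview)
The paper does not prove this lemma at all: it is quoted verbatim from \cite{GSZ11} (via \cite{CDP19}) and used as a black box, so there is no ``paper's own proof'' to compare against. Your sketch is a faithful outline of the Goodrich--Sitchinava--Zhang argument --- a $\Theta(\spac)$-ary aggregation tree for prefix sums and deterministic sample sort with polynomial branching factor for sorting, both of depth $O(1/\delta)$ --- and you correctly flag the hierarchical splitter selection and the balanced routing as the places where care is required; this is exactly what \cite{GSZ11} does, so nothing further is needed here.
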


The computation of prefix sums here means the following: each machine $m \in [\machines]$ holds an input value $x_m$, and outputs $\sum_{i=1}^{m}x_i$.

This result essentially allows us to perform all of the communication we will need to do in a constant number of rounds. For example, if for each edge $\{u,v\}$ we create two entries $(u,v)$ and $(v,u)$ in memory, and the sort these lexicographically, we can ensure that the neighborhoods of all nodes are stored on contiguous blocks of machines. Then, by computing prefix sums, we can compute sums of values among a node's neighborhood, or indeed over the whole graph. This allows us to, e.g., compute objective functions for the method of conditional expectations (see Section \ref{sec:condexp}). Where 2-hop neighborhoods fit in the space of a single machine, we can collect them by sorting edges to collect $1$-hop neighborhoods onto machines, and then having each such machine send requests for the neighborhoods of all the nodes it stores.

An important point to note is that since Lemma \ref{lem:comm} uses $\spac = n^\delta$ for \emph{any} positive constant $\delta$, by setting $\delta$ sufficiently smaller than our space parameter $\epsilon$, we can perform $n^{\Omega(1)}$ \emph{simultaneous} sorting or prefix sum procedures. This will be especially useful to us for efficiently performing the method of conditional expectations.

The \congc model contains the additional restriction of an $O(\log \nin)$-bit bound on the information any node can send to any other in a round (as opposed to \MPC, where only the \emph{total} amount of information a node sends, to all others, is bounded). This restriction is circumvented by Lenzen's routing algorithm \cite{Lenzen13}, which in $O(1)$ rounds allows all information to be routed to the correct destinations so long as all nodes obey an $O(\nin)$ bound on the \emph{total} amount of messages they send and receive.

So, henceforth we need not concern ourselves with routing information to the correct machines; so long as all machines obey the total space bounds at all times, we can do so in a generic fashion using these existing results.

\subsection{Bounded-independence random variables}

To convert randomized algorithms into deterministic algorithms, we follow a classic derandomization recipe: we show that a randomized algorithm can (or be modified to) work using only \emph{bounded independence}, which means that it requires only a short random \emph{seed}, and then we apply the \emph{method of conditional expectations} to efficiently search the space of random seeds and deterministically fix one that performs well.

The first step of this process, showing that a randomized algorithm works using only bounded-independence variables, will make heavy use of the following \emph{concentration bound} for sums of such variables.

\begin{lemma}[Lemma 2.2 of \cite{BR94}]
	\label{lem:conc}
	Let $\cj \ge 4$ be an even integer. Suppose $Z_1, \dots, Z_t$ are $\cj$-wise independent random variables taking values in $[0,1]$. Let $Z = Z_1 + \dots + Z_t$ and $\mu = \Exp{Z}$, and let $\lambda > 0$. Then,
	\begin{align*}
	\Prob{|Z - \mu| \ge \lambda} &\le 2 \left(\frac{\cj t}{\lambda^2}\right)^{\cj/2}
	\enspace.
	\end{align*}
\end{lemma}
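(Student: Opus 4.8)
The plan is to prove the tail bound by the $\cj$-th moment method. Since $\cj$ is even we have $(Z-\mu)^{\cj}\ge 0$, and the event $\{|Z-\mu|\ge\lambda\}$ coincides with $\{(Z-\mu)^{\cj}\ge\lambda^{\cj}\}$, so Markov's inequality gives $\Prob{|Z-\mu|\ge\lambda}\le \Exp{(Z-\mu)^{\cj}}/\lambda^{\cj}$. It therefore suffices to prove the central moment estimate $\Exp{(Z-\mu)^{\cj}}\le(\cj t)^{\cj/2}$ (slightly stronger than the constant $2$ in the statement allows); dividing by $\lambda^{\cj}$ then yields the lemma.

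To bound $\Exp{(Z-\mu)^{\cj}}$, put $Y_i=Z_i-\Exp{Z_i}$, so that $\Exp{Y_i}=0$, $|Y_i|\le 1$, the family $\{Y_i\}$ is still $\cj$-wise independent (each $Y_i$ is a function of $Z_i$ alone), and $Z-\mu=\sum_{i=1}^{t}Y_i$. Expanding the power gives $\Exp{(Z-\mu)^{\cj}}=\sum_{(i_1,\dots,i_{\cj})\in[t]^{\cj}}\Exp{Y_{i_1}\cdots Y_{i_{\cj}}}$. Each monomial here involves at most $\cj$ distinct variables, which are therefore mutually independent, so $\Exp{Y_{i_1}\cdots Y_{i_{\cj}}}=\prod_j\Exp{Y_j^{m_j}}$, the product ranging over the distinct indices $j$ appearing in the tuple with multiplicities $m_j\ge 1$. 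Hence any monomial in which some index appears exactly once vanishes (as $\Exp{Y_j}=0$), and only monomials whose every index has multiplicity $\ge 2$ survive; such a monomial has at most $\cj/2$ distinct indices.

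It remains to bound the surviving sum. Using $|Y_i|\le 1$ we have $|\Exp{Y_{i_1}\cdots Y_{i_{\cj}}}|\le 1$ for each surviving term, so $\Exp{(Z-\mu)^{\cj}}$ is at most the number of tuples in $[t]^{\cj}$ in which every value occurs at least twice. Such a tuple is determined by the partition of the position set $\{1,\dots,\cj\}$ into classes of positions carrying the same index---necessarily a partition into blocks of size $\ge 2$, and hence with some number $\ell\le\cj/2$ of blocks---together with an injection from the blocks into the value set $[t]$, of which there are at most $t^{\ell}\le t^{\cj/2}$ (using $t\ge 1$). Since the number of partitions of a $\cj$-element set into blocks of size $\ge 2$ depends only on $\cj$ and is at most $\cj^{\cj/2}$ by an elementary count, we obtain $\Exp{(Z-\mu)^{\cj}}\le \cj^{\cj/2} t^{\cj/2}=(\cj t)^{\cj/2}$, completing the argument. (Keeping instead the sharper bound $|\Exp{Y_j^{m_j}}|\le\Exp{Y_j^2}\le\Exp{Z_j}$---valid since $m_j\ge 2$, $|Y_j|\le 1$ and $Z_j\in[0,1]$---replaces $t$ by $\mu$, but since $\mu\le t$ this refinement is not needed.)

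I expect the main difficulty to lie in the combinatorial bookkeeping of the last two steps: one must invoke $\cj$-wise---not full---independence correctly, the key point being that a degree-$\cj$ monomial touches at most $\cj$ of the $Y_i$; and one must count the surviving ``fully matched'' index patterns tightly enough to reach the $(\cj t)^{\cj/2}$ form rather than one with a far worse dependence on $\cj$ (such as $\cj^{\cj}$), which is exactly where the fact that every surviving index has multiplicity $\ge 2$---capping the number of distinct indices at $\cj/2$---must be exploited. The remaining ingredients (Markov's inequality and the estimate $Z_i\in[0,1]\Rightarrow\mathrm{Var}(Z_i)\le\Exp{Z_i}$) are routine.
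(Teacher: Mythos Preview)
The paper does not prove this lemma at all; it is simply quoted from \cite{BR94}. Your moment-method outline (Markov on $(Z-\mu)^{c}$, expand, kill singletons, count surviving tuples) is exactly the standard route to such bounds, so in that sense you are aligned with the literature.

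There is, however, a genuine gap in your counting step. You bound the number of fully-matched tuples by
\[
\bigl(\text{number of partitions of }[c]\text{ into blocks of size }\ge 2\bigr)\cdot t^{c/2},
\]
and then assert that this partition count is at most $c^{c/2}$ ``by an elementary count.'' That assertion is false. The number of singleton-free set partitions of $[c]$ is the sequence OEIS~A000296; already at $c=24$ it equals $40{,}073{,}660{,}040{,}755{,}337$, whereas $24^{12}=36{,}520{,}347{,}436{,}104{,}576$, so the claimed inequality fails. Asymptotically this count behaves like $B_c/e$ (with $B_c$ the Bell numbers), which grows roughly like $(c/\ln c)^{c}$ and hence eventually dwarfs $c^{c/2}$. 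The problem is precisely the step where you replace each $t^{\ell}$ (with $\ell\le c/2$) by $t^{c/2}$ and then sum the partition counts separately: this decoupling throws away too much.

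The fix is to keep $\ell$ and $t$ together. One workable route is to use the sharper estimate you mention parenthetically, $|\mathbf{E}[Y_j^{m}]|\le \mathrm{Var}(Z_j)$ for $m\ge 2$, which replaces the crude tuple count by $\sum_{\ell\le c/2} S'(c,\ell)\,(\sigma^2)^{\ell}$ with $\sigma^2=\sum_j \mathrm{Var}(Z_j)\le t$; combining this with an explicit upper bound on $S'(c,\ell)$ of the form $\frac{c!}{2^{\ell}\ell!(c-2\ell)!}\,\ell^{\,c-2\ell}$ (choose the two smallest elements of each block, then assign the rest) and summing carefully yields the desired $(ct)^{c/2}$-type bound. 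This is essentially how Bellare--Rompel proceed; your dismissal of the variance refinement as ``not needed'' is where the argument goes astray.
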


Here, $\cj$-wise independence means that any $\cj$ of the variables $Z_1, \dots, Z_t$ can be considered to be independent, but groups of \emph{more} than $\cj$ cannot.

\subsection{Families of bounded-independence hash functions}

To generate $\cj$-wise independent random variables from a short random seed, we use the concept of families of bounded-independence hash functions:

\begin{definition}
	For $N, L, \cj \in \nat$ such that $\cj \le N$, a family of functions $\mathcal{H} = \{h : [N] \rightarrow [L]\}$ is \emph{$\cj$-wise independent} if for all distinct $x_1, \dots, x_\cj \in [N]$, the random variables $h(x_1), \dots, h(x_\cj)$ are independent and uniformly distributed in $[L]$ when $h$ is chosen uniformly at random from~$\mathcal{H}$.
\end{definition}

That is, the outputs of a uniformly random function from a $\cj$-wise independent family are $\cj$-wise independent random variables. Such families of small size (and therefore requiring a small seed to specify any particular element) are known to exist; we will use the following well-known lemma (see, e.g., Corollary~3.34 in \cite{Vadhan12}):

\begin{lemma}
	\label{lem:hash}
	For every $a$, $b$, $\cj$, there is a family of $\cj$-wise independent hash functions $\mathcal{H} = \{h : \{0,1\}^a \rightarrow \{0,1\}^b\}$ such that choosing a random function from $\mathcal{H}$ takes $\cj \cdot \max\{a,b\}$ random bits, and evaluating a function from $\mathcal{H}$ takes $poly(a,b,\cj)$ computation.
\end{lemma}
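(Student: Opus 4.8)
The plan is to use the standard polynomial-evaluation construction over a binary extension field. Put $m := \max\{a,b\}$ and fix once and for all an irreducible polynomial of degree $m$ over $\mathbb{F}_2$ (one exists for every $m\ge 1$, and can be found by a one-off $poly(m)$-time computation), so that we may identify the set of $m$-bit strings $\{0,1\}^m$ with the finite field $\mathbb{F}_{2^m}$. A member of $\mathcal{H}$ will be specified by a seed consisting of $\cj$ field elements $s_0,\dots,s_{\cj-1}\in\mathbb{F}_{2^m}$, i.e., by $\cj\cdot m=\cj\cdot\max\{a,b\}$ random bits. Given such a seed, the corresponding hash function $h$ acts on an input $x\in\{0,1\}^a$ by zero-padding $x$ to an $m$-bit string $\bar x$, viewing $\bar x$ as an element of $\mathbb{F}_{2^m}$, forming the polynomial value $p(\bar x):=\sum_{i=0}^{\cj-1}s_i\,\bar x^{\,i}$ in $\mathbb{F}_{2^m}$, and returning the first $b$ bits of $p(\bar x)$ (this makes sense because $b\le m$).

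Next I would verify $\cj$-wise independence. Fix distinct $x_1,\dots,x_\cj\in\{0,1\}^a$; since zero-padding is injective and $a\le m$, the field elements $\bar x_1,\dots,\bar x_\cj$ are distinct. The assignment $(s_0,\dots,s_{\cj-1})\mapsto(p(\bar x_1),\dots,p(\bar x_\cj))$ is an $\mathbb{F}_{2^m}$-linear map whose matrix is the Vandermonde matrix on the distinct points $\bar x_1,\dots,\bar x_\cj$, hence invertible; so when the seed is uniform over $\mathbb{F}_{2^m}^{\cj}$, the tuple $(p(\bar x_1),\dots,p(\bar x_\cj))$ is also uniform over $\mathbb{F}_{2^m}^{\cj}$, meaning the values $p(\bar x_j)$ are independent and each uniformly distributed in $\mathbb{F}_{2^m}$. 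Truncation to the first $b$ bits is a fixed function applied coordinatewise, so it preserves independence; and because $b\le m$ we have $2^b\mid 2^m$, so the truncation of a uniform element of $\{0,1\}^m$ is uniform in $\{0,1\}^b$. Hence $h(x_1),\dots,h(x_\cj)$ are independent and uniform in $\{0,1\}^b$, as required. For the evaluation cost, each field operation (addition, multiplication, and reduction modulo the fixed degree-$m$ polynomial) costs $poly(m)$ time, and Horner's rule evaluates $p(\bar x)$ with $O(\cj)$ such operations, for a total of $poly(m,\cj)=poly(a,b,\cj)$.

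There is no genuinely hard step here — this is the classical construction (see, e.g., Corollary~3.34 in \cite{Vadhan12}). The only points that require a little care are: taking the field size to be determined by $\max\{a,b\}$ rather than just $a$, since otherwise $b$-bit outputs could not be made exactly uniform by truncation when $b>a$; and checking that zero-padding the $a$-bit inputs leaves them distinct as field elements, which is exactly what makes the Vandermonde matrix invertible and hence the family $\cj$-wise independent.
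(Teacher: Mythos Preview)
Your proof is correct and is precisely the classical polynomial-over-$\mathbb{F}_{2^m}$ construction. The paper does not actually prove this lemma; it merely states it with a citation to Corollary~3.34 of \cite{Vadhan12}, which is exactly the construction you have written out, so your proposal supplies the omitted standard argument.
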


In particular, we will be using families $\mathcal{H}$ of hash functions with $\nin^{O(1)}$ range and domain, and some large constant $\cj$, so by Lemma \ref{lem:hash}, a uniformly random function from $\mathcal{H}$ can be specified using $O(\log \nin)$ random bits (and this string of bits is what we call its \emph{seed}). This also means that functions can be evaluated using polylogarithmic computation, and therefore space, which is possible locally even in low-space \MPC.

The domain and range of the hash functions must be powers of $2$, and therefore we incur some error when our desired range is not (for domain we simply ignore any excess). If, for example, we have some stated range $r$, we will actually use range $\{0,1\}^{\lceil\log  (r \nin^3)\rceil}$, and then map intervals of this range as equally as possible (i.e., differing in size by at most $1$) to $[r]$. We then have error in our resulting probabilities of $O(\nin^{-3})$, which for our purposes will be clearly negligible. Note, too, that outputs of our `simulated' hash function with range $r$ are still exactly $c$-wise independent (and therefore we can still use Lemma \ref{lem:conc}), but are no longer exactly uniform.

\subsection{Method of conditional expectations}
\label{sec:condexp}

To deterministically choose a good hash function from our small families, we employ the classical \emph{method of conditional expectations} (see, e.g., \cite[Chapter~3.4]{Vadhan12} or \cite[Section~2.4]{CDP19}). In our context, we will show that, over the choice of a uniformly random hash function $h \in \mathcal{H}$, the expectation of some cost function $q$ (which is a sum of functions $q_x$ calculable by individual machines) is at most some value $Q$. That is,
\begin{align*}
\mathbf{E}_{h \in \mathcal{H}} \left[{q(h):=\sum_{\text{machines } x} q_x(h)}\right]
&\le
Q
\enspace.
\end{align*}

The probabilistic method implies the existence of a hash function $h^* \in \mathcal{H}$ for which $q(h^*) \le Q$, and our goal is to find one such $h^* \in \mathcal{H}$ in $O(1)$ rounds of \congc or low-space \MPC.

We will find the sought hash function $h^*$ by deterministically fixing the $O(\log \nin)$-bit seed that specifies it (cf. Lemma \ref{lem:hash}). To do so, we have all machines agree gradually on chunks of $\delta\log\nin$ bits at a time, for sufficiently small constant $\delta$. That is, we iteratively extend a fixed, agreed, prefix $h_{pre}$ of the final hash function $h^*$, $\delta\log\nin$ bits at a time, until we have fixed the entire seed.

For each chunk, and for each $i$, $1 \le i \le \nin^\delta$, we write $i$ as a $\delta\log\nin$-length bit string. Then, each machine $x$ calculates $\mathbf{E}_{h \in \mathcal{H}} \left[ q_x(h) | \Xi_{h_{pre}+i}\right]$, where $\Xi_{h_{pre}+i}$ is the event that the seed specifying $h$ is prefixed by the current fixed prefix $h_{pre}$, followed by $i$. We then sum these values over all machines for each $i$ concurrently (using Lemma \ref{lem:comm} and choosing $\delta$ sufficiently smaller than the space parameter $\epsilon$), obtaining $\mathbf E_{h \in \mathcal{H}} \left[ q(h) | \Xi_{h_{pre}+i} \right]$. By the probabilistic method, at least one of these values is at most $\mathbf E_{h \in \mathcal{H}} \left[ q(h) | \Xi_{h_{pre}} \right]$. We fix a value of $i$ such that this is the case, append this to $h_{pre}$, and continue.

After a constant number of iterations, we find the entire string of bits to define a hash function $h^* \in \mathcal{H}$. Since in every iteration we did not increase the conditional expectation $\mathbf E_{h \in \mathcal{H}} \left[ q(h) | \Xi_{h_{pre}} \right]$, we have
\[ q(h^*) \le \mathbf E_{h \in \mathcal{H}} \left[ q(h)\right]\le Q\enspace.\]
Each iteration requires only a constant number of rounds in \congc or \MPC, so this process takes only a constant number of rounds in total.

\section{Coloring in \congc and Linear-Space \MPC}
We are now ready to present our constant-round algorithm for $(\Delta+1)$-list coloring in \congc. Our algorithm will be a recursive procedure, which we call \textsc{ColorReduce}. To color an input graph $\Gin$ we will call \textsc{ColorReduce}$(\Gin,\Delta)$ (and in subsequent calls, $\ell$ will serve as an approximation of maximum degree). We will denote by $\nin$ the number of nodes in $\Gin$, and when analyzing a subsequent recursive \textsc{ColorReduce}$(G,\ell)$ we denote by $n_G$ the number of nodes in $G$.

\begin{algorithm}[H]
	\caption{\textsc{ColorReduce}$(G,\ell)$}
	\label{alg:ColorReduce}
	\begin{algorithmic}
		\State If $G$ has size $O(\nin)$: collect $G$ onto a single machine and color $G$ locally.
		\State Otherwise: $G_0, \dots, G_{\ell^{0.1}} \gets \textsc{Partition}(G,\ell)$.
		\State Let $\ell' = \ell^{0.9} - \ell^{0.6}$.
		\State For each $i = 1, \dots, \ell^{0.1}-1$, perform \textsc{ColorReduce}$(G_i,\ell')$ in parallel.
		\State Update color palettes of $G_{\ell^{0.1}}$, perform \textsc{ColorReduce}$(G_{\ell^{0.1}},\ell')$.
		\State Update color palettes of $G_0$, collect $G_0$ onto a single machine and color locally.
	\end{algorithmic}
\end{algorithm}

The algorithm relies on a procedure for partitioning a coloring instance into multiple instances of lower size and lower maximum degree. Specifically, we partition the node set into $\ell^{0.1}$ bins, and then partition the colors into all but the final bin (and restrict nodes' palettes to the colors sent to their bin). Next, we show that (after possibly removing some \emph{bad nodes}, to be defined shortly) each node (not in the final bin) has more colors from its palette assigned to its bin than neighbors. This means that we can recursively color the graphs induced by these bins in parallel, since the color palettes of any nodes from different bins are now completely disjoint. Then, we must afterwards color the final bin $\ell$ which was not previously assigned colors, and the graph of bad nodes. For that, in the last two steps in \textsc{ColorReduce}, before we recursively color $G_{\ell^{0.1}}$ and then color $G_0$, we first update color palettes of all nodes in $G_{\ell^{0.1}}$ and $G_0$, respectively, by removing from the palettes of every node $v$ the colors used by all its neighbors (in $G_0, \dots, G_{\ell^{0.1}}$) already colored before.

\begin{algorithm}[H]
	\caption{\textsc{Partition}$(G,\ell)$}
	\label{alg:Partition}
	\begin{algorithmic}
		\State Let hash function $h_1: [\nin] \to [\ell^{0.1}]$ map nodes $v$ to a bin $h_1(v) \in [\ell^{0.1}]$.
		\State Let hash function $h_2: [\nin^2] \to [\ell^{0.1}-1]$ map colors $\gamma$ to a bin $h_2(\gamma) \in [\ell^{0.1}-1]$.
		\State Let $G_0$ be the graph induced by \emph{bad} nodes.
		\State Let $G_1, \dots, G_{\ell^{0.1}}$ be the graphs induced by \emph{good} nodes in bins $1, \dots, \ell^{0.1}$ respectively.
		\State Restrict palettes of nodes in $G_1, \dots, G_{\ell^{0.1}-1}$ to colors assigned by $h_2$ to corresponding bins.
		\State Return $G_0, \dots, G_{\ell^{0.1}}$.
	\end{algorithmic}
\end{algorithm}

Since we consider list coloring, the total number of unique colors can be up to $\nin^2$, which necessitates the larger domain for $h_2$.

To fully specify the algorithm, we must define what we mean by \emph{good} and \emph{bad nodes} and by \emph{good} and \emph{bad bins}. Let $d(v)$ denote degree of $v$ in $G$, and $d'(v)$ denote degree of $v$ within its new bin $h_1(v)$, that is, in the subgraph of $G$ induced by all nodes $u \in V$ with $h_1(u) = h_1(v)$. Let $p(v)$ be the size of the color palette of node $v$, and $p'(v)$ be the palette size of $v$ after the call: for nodes mapped to bins in $[\ell^{0.1}-1]$, $p'(v)$ denotes the number of colors from the color palette of $v$ that are hashed to the bin $h_1(v)$; for nodes mapped to bin $\ell^{0.1}$, $p'(v)$ denotes the number of colors remaining from the color palette of $v$ \emph{after} $v$'s neighbors \emph{not} in bin $\ell^{0.1}$ (that is, in $G_1, \dots, G_{\ell^{0.1}-1}$) have been colored by \textsc{ColorReduce}.

\begin{definition}[\textbf{Good and bad nodes and bins}]\mbox{}\label{def:good-and-bad}
	\begin{itemize}
		\item A node $v$ in bins $[\ell^{0.1}-1]$ is \emph{good} if $|d'(v) - d(v) \ell^{-0.1}| \le \ell^{0.6}$ and $p'(v) \ge p(v) \ell^{-0.1} + \ell^{0.7}$.
		\item A node $v$ in bin $\ell^{0.1}$ is \emph{good} if $|d'(v) - d(v) \ell^{-0.1}| \le \ell^{0.6}$.
		\item A bin in $[\ell^{0.1}]$ is \emph{good} if it contains fewer than $2n_G \ell^{-0.1}+\nin^{0.6}$ nodes.
		\item Nodes and bins are \emph{bad} if they are not \emph{good}.
	\end{itemize}
\end{definition}

Intuitively, bad nodes and bins are those whose behavior differs significantly from what we would expect in a random partition of $G$ into $\ell^{0.1}$ node-disjoint graphs, with random $h_1$ and $h_2$.

\paragraph{Overview of the analysis of \textsc{ColorReduce}.}

In order to analyze the complexity of \textsc{ColorReduce}, we have to understand how to implement each single call to \textsc{Partition}, and how often we will be making the recursive calls. Further, in order the algorithm to work correctly, we have to ensure that every time we collect $G_0$ (containing bad nodes) onto a single machine in order to color it, the total size of $G_0$ is $O(\nin)$ (since otherwise $G_0$ would not fit a single machine).

In Section \ref{subsec:maintaining-invariant}, we describe a useful invariant for \textsc{Partition} call that is maintained throughout the entire run of \textsc{ColorReduce}. In Section \ref{subsec:bad-and-good-for-random-hash-function}, we estimate the number of bad nodes and bins for random hash functions $h_1$ and $h_2$ from appropriate families of hash functions. In Section \ref{subsec:choice-of-hash-functions}, we describe how to deterministically choose hash functions $h_1$ and $h_2$ (from the same families of hash functions) so that these estimations are still ensured, and that we guarantee $G_0$ is sufficiently small.
Next, in Section \ref{sec:recursion}, we estimate the number of recursive calls performed in the call to \textsc{ColorReduce}$(G,\Delta)$ and subsequent recursive calls. Our main result (cf. Lemma \ref{lem:9-calls}) is that after a depth of 9 recursive calls, the graph induced by each bin is of size $O(\nin)$. This implies that the call to \textsc{ColorReduce}$(\Gin,\Delta)$ creates a recursion tree of depth at most $9$, resulting in a constant number of groups of instances of size $O(\nin)$ to be collected onto machines and colored sequentially. With these claims, in Section \ref{sec:final-analysis} we finalize our analysis of Theorem \ref{thm:CC-coloring}, and show that \textsc{ColorReduce}$(G,\Delta)$ deterministically returns $(\Delta+1)$-list coloring in $O(1)$ rounds in \congc and in $O(1)$ rounds on \MPC with $O(\nin)$ local space on any machine, and (optimal for $(\Delta+1)$-list coloring) $O(\nin\Delta)$ total space. Then, in Section \ref{subsec:reducing-space-for-Delta+1-coloring-on-MPC}, we reduce the global \MPC space for the $(\Delta+1)$-coloring problem to the optimal bound of $O(\nim+\nin)$.

\subsection{Invariant maintained during all calls to {\sc Partition}}
\label{subsec:maintaining-invariant}

We begin our analysis with presenting a central invariant of our \textsc{Partition} call that is maintained throughout the entire run of \textsc{ColorReduce} (recall that $\ell' = \ell^{0.9} - \ell^{0.6}$).


\begin{lemma}
	\label{lemma:properties-of-Partition}
	Assume that at the start of our \textsc{Partition} call, the following holds for all nodes $v$ in $G$:
	\begin{inparaenum}[(i)]
		\item $\ell < p(v)$,
		\item $d(v) \le \ell + \ell^{0.7}$,
		\item $d(v) < p(v)$.
	\end{inparaenum}
	Then for \emph{all good} nodes $v$ in $G$,
	\begin{enumerate}[(i)]
		\item $\ell' < p'(v)$,
		\item $d'(v) \le \ell' + \ell'^{0.7}$, and
		\item $d'(v) < p'(v)$.
	\end{enumerate}
\end{lemma}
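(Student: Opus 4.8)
\emph{Proof plan.} The conclusion is a list of three explicit inequalities, so my plan is to verify each of them directly from the definition of a \emph{good} node (Definition~\ref{def:good-and-bad}) together with the three hypotheses, treating separately the two kinds of good node: those that $h_1$ maps to a bin in $[\ell^{0.1}-1]$, and those it maps to bin $\ell^{0.1}$. Throughout I will use that $\ell$ exceeds some fixed absolute constant; this entails no loss of generality, since \textsc{Partition}$(G,\ell)$ is only ever invoked with $\ell$ above that constant --- for smaller $\ell$ the instance $G$ already has total size $O(\nin)$ and is disposed of by the base case of \textsc{ColorReduce} (as made precise in Section~\ref{sec:recursion}). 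I would begin with conclusion (ii), which is identical in the two cases: goodness gives $d'(v)\le d(v)\ell^{-0.1}+\ell^{0.6}$, and hypothesis (ii) gives $d(v)\le\ell+\ell^{0.7}$, so $d'(v)\le\ell^{0.9}+2\ell^{0.6}$; it then only remains to check $\ell^{0.9}+2\ell^{0.6}\le\ell'+\ell'^{0.7}$, equivalently $(\ell^{0.9}-\ell^{0.6})^{0.7}\ge 3\ell^{0.6}$, which is a routine estimate valid once $\ell$ is past a fixed constant (the left-hand side being of order $\ell^{0.63}$, the right-hand side of order $\ell^{0.6}$).

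For a good node $v$ in a bin $b\in[\ell^{0.1}-1]$, conclusions (i) and (iii) then follow immediately. Goodness gives $p'(v)\ge p(v)\ell^{-0.1}+\ell^{0.7}$, so hypothesis (i), $p(v)>\ell$, yields $p'(v)>\ell^{0.9}+\ell^{0.7}>\ell^{0.9}-\ell^{0.6}=\ell'$, which is (i); and hypothesis (iii), $d(v)<p(v)$, gives $d(v)\ell^{-0.1}<p(v)\ell^{-0.1}$, so that $d'(v)\le d(v)\ell^{-0.1}+\ell^{0.6}<p(v)\ell^{-0.1}+\ell^{0.7}\le p'(v)$, which is (iii).

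The one case with genuine content is a good node $v$ in bin $\ell^{0.1}$. Here the crucial preliminary step is to secure a lower bound on $p'(v)$: the colors removed from $v$'s palette before the recursive call \textsc{ColorReduce}$(G_{\ell^{0.1}},\ell')$ are precisely the colors of those neighbors of $v$ that were colored in the earlier steps of \textsc{ColorReduce}, namely the \emph{good} neighbors of $v$ lying in bins $1,\dots,\ell^{0.1}-1$; every \emph{bad} neighbor of $v$ --- including any bad neighbor in bin $\ell^{0.1}$ itself --- belongs to $G_0$ and is colored only afterwards. There are at most $d(v)-d'(v)$ such neighbors, so $p'(v)\ge p(v)-d(v)+d'(v)$. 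Conclusion (iii) is then immediate from hypothesis (iii), since $p(v)-d(v)\ge 1$ forces $p'(v)\ge d'(v)+1$. For conclusion (i), I would use the degree part of the goodness condition, $d'(v)\ge d(v)\ell^{-0.1}-\ell^{0.6}$, to reduce the claim $p'(v)>\ell'=\ell^{0.9}-\ell^{0.6}$ to the single inequality $p(v)-d(v)(1-\ell^{-0.1})>\ell^{0.9}$; its left-hand side is monotone decreasing in $d(v)$, and hypotheses (ii) and (iii) together give $d(v)\le\min\{p(v)-1,\ \ell+\ell^{0.7}\}$, so it suffices to verify the inequality at those two extreme values of $d(v)$, where it reduces to $1+(p(v)-1)\ell^{-0.1}\ge 1+\ell^{0.9}$ and to $p(v)-\ell-\ell^{0.7}+\ell^{0.9}+\ell^{0.6}\ge 2+\ell^{0.9}+\ell^{0.6}$ respectively.

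I expect the only delicate point to be this last case, and within it two small subtleties: (a) correctly accounting for which neighbors of $v$ have been colored by the time $G_{\ell^{0.1}}$'s palettes are updated --- in particular that bad neighbors are deferred to $G_0$, so the palette loss is at most $d(v)-d'(v)$ rather than $d(v)$ --- and (b) recognizing that no single clean estimate proves conclusion (i), which forces the case split according to whether $d(v)$ is limited by the degree bound (ii) or by the palette bound (iii). Everything else is direct substitution together with the one constant-threshold estimate used for conclusion (ii).
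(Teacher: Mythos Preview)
Your proposal is correct and follows the same case split and the same key inequalities as the paper's proof. The only noteworthy difference is in your treatment of conclusion~(i) for a good node in bin $\ell^{0.1}$, which you flag as the ``delicate point'' requiring a case split on whether $d(v)$ is constrained by hypothesis~(ii) or by hypothesis~(iii). The paper avoids this split entirely: since $1-\ell^{-0.1}>0$, hypothesis~(iii) alone gives
\[
p(v)-d(v)(1-\ell^{-0.1}) \;>\; p(v)-p(v)(1-\ell^{-0.1}) \;=\; p(v)\,\ell^{-0.1} \;>\; \ell^{0.9},
\]
the last step using only hypothesis~(i). Hypothesis~(ii) is not needed for this conclusion at all, so the point you anticipated as subtle is in fact the simplest part of the argument.
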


Note that since we assume $d(v) \le \ell + \ell^{0.7}$, we can also assume $\ell$ is at least a sufficiently large constant, since otherwise maximum degree is also bounded by a constant, so the instance is of total size $O(\nin)$ and would have already been collected and colored on a single machine.

\begin{proof}
	Let us first consider any good node $v$ with $h_1(v) \in [\ell^{0.1}-1]$. Then, $d'(v) \le d(v) \ell^{-0.1} + \ell^{0.6}$ and $p'(v) \ge p(v) \ell^{-0.1} + \ell^{0.7}$, by Definition \ref{def:good-and-bad}. Therefore,
	\begin{enumerate}[(i)]
		\item since $\ell < p(v)$, we have $p'(v) \ge p(v) \ell^{-0.1} + \ell^{0.7} > p(v) \ell^{-0.1} > \ell^{0.9} > \ell'$;
		\item since $d(v) \le \ell + \ell^{0.7}$, we have $d'(v) \le d(v) \ell^{-0.1} + \ell^{0.6} \le (\ell^{0.9} + \ell^{0.6}) + \ell^{0.6} < \ell' + \ell'^{0.7}$;
		\item since $d(v) < p(v)$, we have $p'(v) \ge p(v) \ell^{-0.1} + \ell^{0.7} > d(v) \ell^{-0.1} + \ell^{0.6} \ge d'(v)$.
	\end{enumerate}
	
	Next, let us consider any good node $v$ with $h_1(v) = \ell^{0.1}$. Then, by Definition \ref{def:good-and-bad}, it holds $|d'(v) - d(v) \ell^{-0.1}| \le \ell^{0.6}$. Further, notice that we update the color palette of $v$ (and hence set $p'(v)$) only after coloring in \textsc{ColorReduce} all $v$’s neighbors in $G_1, \dots, G_{\ell^{0.1}-1}$. Therefore,
	\begin{enumerate}[(i)]
		\item since at most one color is removed from $v$'s palette for each neighbor of $v$ \emph{not} in bin $\ell^{0.1}$, since $|d'(v) - d(v) \ell^{-0.1}| \le \ell^{0.6}$, $d(v) < p(v)$, and $\ell < p(v)$, we have,
		\begin{align*}
		p'(v)
		&\ge
		p(v) - (d(v) - d'(v))
		\ge
		p(v) - (d(v) - (d(v) \ell^{-0.1} + \ell^{0.6}))
		=
		p(v) - d(v) (1 - \ell^{-0.1}) - \ell^{0.6}
		\\
		&>
		p(v) - p(v) (1 - \ell^{-0.1}) - \ell^{0.6}
		=
		p(v) \ell^{-0.1} - \ell^{0.6}
		>
		\ell^{0.9} - \ell^{0.6}
		=
		\ell'
		\enspace;
		\end{align*}
		\item as before: since $d(v) \le \ell + \ell^{0.7}$, we have $d'(v) \le d(v) \ell^{-0.1} + \ell^{0.6} \le (\ell^{0.9} + \ell^{0.6}) + \ell^{0.6} < \ell' + \ell'^{0.7}$;
		\item $d'(v) < p'(v)$ follows since $d(v) < p(v)$, and colors are only removed from $v$'s palette when used by neighbors.\qedhere
	\end{enumerate}
\end{proof}

Notice that at the very beginning, with our input graph $G$ of maximum degree $\Delta$ and color palettes satisfying $d(v) < p(v)$ for all $v$, our initial setting $\ell = \Delta$ ensures that before the first call of \textsc{Partition} all conditions of Lemma~\ref{lemma:properties-of-Partition} (for all nodes $v$:
\begin{inparaenum}[(i)]
	\item $\ell < p(v)$,
	\item $d(v) \le \ell + \ell^{0.7}$,
	\item $d(v) < p(v)$)
\end{inparaenum}
are satisfied, and hence, since \textsc{Partition} is only called on good nodes, Lemma~\ref{lemma:properties-of-Partition} ensures that the properties in the invariant are preserved for future calls of \textsc{Partition}.

\begin{corollary}
	\label{corollary:properties-of-Partition}
	The input instance to any call to \textsc{Partition} satisfies the following three conditions for all nodes $v$:
	\begin{inparaenum}[(i)]
		\item $\ell < p(v)$,
		\item $d(v) \le \ell + \ell^{0.7}$,
		\item $d(v) < p(v)$.
	\end{inparaenum}
\end{corollary}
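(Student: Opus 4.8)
The plan is a straightforward induction on the depth of the recursion tree rooted at the top-level call \textsc{ColorReduce}$(\Gin,\Delta)$, with Lemma~\ref{lemma:properties-of-Partition} furnishing the inductive step. The invariant to maintain is exactly the conjunction of the three hypotheses of Lemma~\ref{lemma:properties-of-Partition}: at the moment \textsc{Partition} is invoked on a graph $G$ with parameter $\ell$, every node $v$ of $G$ satisfies (i) $\ell < p(v)$, (ii) $d(v) \le \ell + \ell^{0.7}$, and (iii) $d(v) < p(v)$.

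For the base case, the unique depth-$0$ invocation is \textsc{Partition}$(\Gin,\Delta)$, where $\ell = \Delta$. Since $\Gin$ is an input to $(\Delta+1)$-list coloring, every node has a palette of size at least $\Delta+1$, so $p(v) \ge \Delta+1 > \Delta = \ell$ gives (i); $d(v) \le \Delta = \ell \le \ell + \ell^{0.7}$ gives (ii); and $d(v) \le \Delta < \Delta+1 \le p(v)$ gives (iii).

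For the inductive step, suppose the invariant holds at every \textsc{Partition} call of depth $k$ and consider one of depth $k+1$. By the structure of Algorithm~\ref{alg:ColorReduce}, this call occurs inside some recursive \textsc{ColorReduce}$(G_j,\ell')$ with $j\in\{1,\dots,\ell^{0.1}\}$ and $\ell'=\ell^{0.9}-\ell^{0.6}$, which was spawned by a call \textsc{ColorReduce}$(G,\ell)$ whose partitioning step ran \textsc{Partition}$(G,\ell)$; by the induction hypothesis, \textsc{Partition}$(G,\ell)$ satisfied (i)--(iii). (The bad-node graph $G_0$ is never passed to a further \textsc{Partition} call: Algorithm~\ref{alg:ColorReduce} collects it onto one machine and colors it directly, so it does not arise.) By construction in Algorithm~\ref{alg:Partition} and Definition~\ref{def:good-and-bad}, each $G_j$ with $j\ge 1$ is the subgraph of $G$ induced by the \emph{good} nodes of bin $j$, carrying the prescribed restricted (and, for $j=\ell^{0.1}$, updated) palettes; in particular the palette-update step that \textsc{ColorReduce} performs on $G_{\ell^{0.1}}$ before recursing is precisely the one used to define $p'(v)$ in Lemma~\ref{lemma:properties-of-Partition}. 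Hence Lemma~\ref{lemma:properties-of-Partition} applies to \textsc{Partition}$(G,\ell)$ and yields $\ell' < p'(v)$, $d'(v) \le \ell' + \ell'^{0.7}$, and $d'(v) < p'(v)$ for every good node $v$ of $G$, i.e.\ for every node of each $G_j$. These are exactly conditions (i)--(iii) for the depth-$(k+1)$ call \textsc{Partition}$(G_j,\ell')$, which closes the induction.

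I do not expect a genuine obstacle here: the corollary is essentially a bookkeeping consequence of Lemma~\ref{lemma:properties-of-Partition}. The only points worth stating carefully are that every graph ever fed to \textsc{Partition} really is an induced subgraph of good nodes bearing the correct (restricted or updated) palettes, and that the degenerate small-$\ell$ case causes no trouble: by the remark following Lemma~\ref{lemma:properties-of-Partition}, when $\ell$ is below the constant threshold the graph has total size $O(\nin)$ and is colored locally in the first line of \textsc{ColorReduce}, so \textsc{Partition} is simply not called.
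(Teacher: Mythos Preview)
Your proposal is correct and follows essentially the same approach as the paper: verify the base case for the initial call with $\ell=\Delta$, then use Lemma~\ref{lemma:properties-of-Partition} as the inductive step, noting that \textsc{Partition} is only ever invoked on graphs of good nodes. The paper's own argument is the brief paragraph immediately preceding the corollary, and your write-up simply spells out the same induction more explicitly.
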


\subsection{Good and bad nodes and bins for \emph{random} hash functions}
\label{subsec:bad-and-good-for-random-hash-function}

In this section we analyze the likelihood of having many bad nodes and bins in a single call to \textsc{Partition}, assuming that $h_1$ and $h_2$ are hash functions \emph{chosen at random} from the families of $c$-wise independent hash functions. (In Section \ref{subsec:choice-of-hash-functions} we will extend our analysis and show how to deterministically choose appropriate $h_1$ and $h_2$.)

Let $\mathcal{H}_1$ be a $c$-wise independent family of hash functions $h: [\nin] \to [\ell^{0.1}]$, and let $\mathcal{H}_2$ be a $c$-wise independent family of hash functions $h: [\nin^2] \to [\ell^{0.1}-1]$, for sufficiently large constant $c$.

In order to understand basic properties of pairs of hash functions from $\mathcal{H}_1$ and $\mathcal{H}_2$, we define the \emph{cost} function $\qual(h_1,h_2)$ of a pair of hash functions $h_1 \in \mathcal{H}_1$ and $h_2 \in \mathcal{H}_2$, as follows:
\begin{align}
\label{def:quality}
\qual(h_1,h_2)
&=
|\{\text{bad nodes under }h_1,h_2\}| + \nin \cdot |\{\text{bad bins under }h_1\}|
\enspace.
\end{align}

We will analyze the likelihood of having many bad nodes and bins, assuming that $h_1$ and $h_2$ are chosen at random from the families of $c$-wise independent hash functions $\mathcal{H}_1$ and $\mathcal{H}_2$, and relying on Corollary \ref{corollary:properties-of-Partition} that at the start of our \textsc{Partition} for all nodes $v$:
\begin{inparaenum}[(i)]
	\item $\ell < p(v)$,
	\item $d(v) \le \ell + \ell^{0.7}$,
	\item $d(v) < p(v)$.
\end{inparaenum}

\paragraph{Good and bad bins.}

We will first prove that with high probability, no bins are bad.

\begin{lemma}\label{lem:badbin}
	With probability at least $1-\nin^{-2}$, all bins are good.
\end{lemma}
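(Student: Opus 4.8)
The plan is to fix an arbitrary bin $b \in [\ell^{0.1}]$, bound the probability that $b$ is bad (i.e., that it receives $\ge 2n_G\ell^{-0.1} + \nin^{0.6}$ nodes), and then take a union bound over all $\ell^{0.1} \le \nin$ bins. Note that whether bin $b$ is bad depends only on the node-hash function $h_1$, not on $h_2$, so throughout we only use the randomness of $h_1 \in \mathcal{H}_1$. For each node $v \in G$, introduce the indicator $Z_v = \mathbf{1}[h_1(v) = b]$, and let $Z = \sum_{v \in G} Z_v$ be the number of nodes landing in bin $b$. Since $h_1$ is drawn from a $c$-wise independent family mapping $[\nin]$ to (essentially) $[\ell^{0.1}]$, the variables $\{Z_v\}$ are $c$-wise independent, each takes values in $[0,1]$, and $\Exp{Z_v} \approx \ell^{-0.1}$ (up to the $O(\nin^{-3})$ rounding error in the simulated hash range, which is negligible here). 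Hence $\mu := \Exp{Z} \approx n_G \ell^{-0.1}$.

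The main step is a concentration argument via Lemma~\ref{lem:conc} applied to $Z = \sum_{v} Z_v$ with $t = n_G \le \nin$ variables. We want to show $\Prob{Z \ge 2n_G\ell^{-0.1} + \nin^{0.6}} \le \nin^{-3}$, so that the union bound over $\le \nin$ bins gives the claimed $\nin^{-2}$. The deviation we need to rule out is $Z - \mu \ge n_G\ell^{-0.1} + \nin^{0.6} - O(\nin^{-2}) \ge \nin^{0.6}$ (using $n_G \ell^{-0.1} \ge 0$), so it suffices to set $\lambda = \nin^{0.6}$. Lemma~\ref{lem:conc} then gives $\Prob{|Z-\mu| \ge \lambda} \le 2(c t / \lambda^2)^{c/2} \le 2(c\nin / \nin^{1.2})^{c/2} = 2(c \nin^{-0.2})^{c/2}$. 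For $\nin$ larger than a suitable constant (recall the remark after Lemma~\ref{lemma:properties-of-Partition} that $\ell$, and a fortiori $\nin$ since the instance has size $\omega(\nin)$, is at least a large constant), we have $c\nin^{-0.2} \le \nin^{-0.1}$, so this is at most $2\nin^{-0.05 c}$, which is $\le \nin^{-3}$ once the constant $c$ is chosen with $c \ge 60$. This is exactly the kind of place where "sufficiently large constant $c$" is invoked.

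I would then conclude: by a union bound over the at most $\ell^{0.1} \le \nin$ bins, the probability that any bin is bad is at most $\nin \cdot \nin^{-3} = \nin^{-2}$, so with probability at least $1 - \nin^{-2}$ all bins are good, as claimed.

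The only real subtlety — and the step I'd expect to need the most care — is matching constants so that the polynomial tail bound from Lemma~\ref{lem:conc} beats the $\nin$ factor from the union bound with room to spare; this is purely a matter of choosing $c$ large enough (and absorbing the lower-order terms $\nin^{0.6}$ vs.\ $n_G\ell^{-0.1}$, and the $O(\nin^{-3})$ hash-rounding error, into slack in the exponent). There is also a minor bookkeeping point: $\ell^{0.1}$ need not be an integer, and the simulated hash range is a power of two rather than exactly $\ell^{0.1}$, but as noted in the preliminaries the outputs remain exactly $c$-wise independent (so Lemma~\ref{lem:conc} still applies) and the per-node landing probability is $\ell^{-0.1} \pm O(\nin^{-3})$, which changes $\mu$ by a negligible additive $O(\nin^{-2})$ — far smaller than the $\nin^{0.6}$ slack built into the definition of a good bin.
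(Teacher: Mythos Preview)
Your proposal is correct and follows essentially the same approach as the paper: fix a bin, apply the bounded-independence concentration bound (Lemma~\ref{lem:conc}) to the sum of indicator variables for nodes landing in that bin with $\lambda = \nin^{0.6}$, and then union-bound over all bins. You are in fact slightly more careful than the paper's own proof in tracking the hash-rounding error and the extra slack from the factor $2$ in the good-bin threshold, and you (correctly) claim $c$-wise independence of the indicators for a fixed target bin where the paper conservatively writes $(c-1)$-wise.
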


\begin{proof}
	We apply Lemma \ref{lem:conc} with $Z_1, \dots, Z_{n_G}$ as indicator random variables for the events that each node is hashed to a particular bin $b \in [\ell^{0.1}]$. These variables are $(c-1)$-wise independent and each have expectation $\ell^{-0.1}$. Since we assume $c$ to be a sufficiently high constant, by Lemma \ref{lem:conc}, we obtain $\Prob{|Z-\Exp{Z}| \ge n^{0.6}} \le n^{-3}$.
	\junk{
		\begin{align*}
		\Prob{|Z-\mu| \ge n^{0.6}}
		&\le
		2 \left(\frac{c n_G}{n^{1.2}}\right)^{c/2}
		\le
		2 \left(\frac{c n}{n^{1.2}}\right)^{c/2}
		\le
		2 (c n^{-0.2})^{c/2}
		\le
		n^{-3}
		\enspace.
		\end{align*}
	}
	By the union bound over bins $b$, with probability at least $1 - \nin^{-2}$ all bins contain fewer than $n_G \ell^{-0.1} + \nin^{0.6}$ nodes, and are therefore good.
\end{proof}

\paragraph{Good and bad nodes.}

In order to estimate the number of bad nodes, we begin with a proof that the probability that nodes' degrees differ too much from what would be expected (from a fully random partition) is low.

\begin{lemma}\label{lem:badnode1}
	The probability that a node $v$ satisfies $|d'(v) - d(v) \ell^{-0.1}| \ge \ell^{0.6}$ is at most $\ell^{-3}$.
\end{lemma}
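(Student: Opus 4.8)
The plan is to apply the concentration bound of Lemma \ref{lem:conc} to the random variable $d'(v)$, after expressing it as a sum of bounded-independence indicator variables. Fix a node $v$, and for each neighbor $u$ of $v$ in $G$, let $Z_u$ be the indicator of the event $h_1(u) = h_1(v)$. Then $d'(v) = \sum_{u \sim v} Z_u$ counts the neighbors of $v$ that land in $v$'s own bin. There are $d(v) =: t$ such variables, each takes values in $\{0,1\}$, and each has expectation exactly $\ell^{-0.1}$ (once we account for the $O(\nin^{-3})$ error coming from rounding the hash range to a power of $2$, which is negligible here). Since $h_1$ is drawn from a $c$-wise independent family, the variables $\{Z_u\}$ are $c$-wise independent; strictly speaking we condition on the value of $h_1(v)$, which costs one coordinate, so they are $(c-1)$-wise independent, exactly as in the proof of Lemma \ref{lem:badbin}.

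First I would set $Z = d'(v)$ and $\mu = \Exp{Z} = d(v)\ell^{-0.1}$, and invoke Lemma \ref{lem:conc} with $\lambda = \ell^{0.6}$ and $\cj = c-1$ (rounding $c$ so this is an even integer $\ge 4$). This gives
\begin{align*}
\Prob{|d'(v) - d(v)\ell^{-0.1}| \ge \ell^{0.6}}
&\le 2\left(\frac{(c-1)\, d(v)}{\ell^{1.2}}\right)^{(c-1)/2}.
\end{align*}
Now I would use Corollary \ref{corollary:properties-of-Partition}, which guarantees $d(v) \le \ell + \ell^{0.7} \le 2\ell$ (for $\ell$ at least a sufficiently large constant, which we may assume as noted after Lemma \ref{lemma:properties-of-Partition}). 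Substituting, the bound becomes $2\left(2(c-1)\ell^{-0.2}\right)^{(c-1)/2}$, and since $c$ is a sufficiently large constant while $\ell$ is large, $2(c-1)\ell^{-0.2} < \ell^{-0.1}$, so the whole expression is at most $\ell^{-(c-1)/20} \cdot 2 \le \ell^{-3}$ for $c$ chosen large enough (say $c \ge 61$). This is exactly the claimed bound.

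The only genuinely delicate point — and the one I would be careful about — is the independence bookkeeping: the $Z_u$ are functions of $h_1(u)$ for $u$ ranging over $v$'s neighbors, but the event we care about also involves $h_1(v)$. The clean way to handle this is to condition on the outcome of $h_1(v)$: for each fixed value of $h_1(v)$, the remaining hash outputs $h_1(u)$ are $(c-1)$-wise independent and uniform, each $Z_u$ then has conditional expectation $\ell^{-0.1}$, and Lemma \ref{lem:conc} applies verbatim to give the same bound; since this holds for every conditioning value, it holds unconditionally. (If one prefers to avoid even this, one can alternatively note that $v$ itself contributes only a single coordinate, so working with a $c$-wise independent family already leaves $c-1$-wise independence among the neighbor variables.) Everything else is a routine substitution of the degree bound from Corollary \ref{corollary:properties-of-Partition} and a choice of the constant $c$ large enough to absorb the constants; no high-probability union bound is needed here since the statement is about a single fixed node $v$.
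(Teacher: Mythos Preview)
Your proof is correct and follows essentially the same approach as the paper: define indicator variables $Z_u$ for neighbors landing in $v$'s bin, observe they are $(c-1)$-wise independent with mean $\ell^{-0.1}$, apply Lemma~\ref{lem:conc} with $\lambda=\ell^{0.6}$, and plug in the degree bound $d(v)\le \ell+\ell^{0.7}$ from Corollary~\ref{corollary:properties-of-Partition} to get $\ell^{-3}$ for sufficiently large constant $c$. Your write-up is in fact more careful than the paper's about the independence bookkeeping (conditioning on $h_1(v)$) and the rounding error in the hash range, but the argument is the same.
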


\begin{proof}
	We apply Lemma \ref{lem:conc} with $Z_1, \dots, Z_{d(v)}$ as indicator random variables for the events that each of $v$'s neighbors is hashed to the same bin as $v$. These variables are $(c-1)$-wise independent and each have expectation $\ell^{-0.1}$. Since we assume $c$ to be a sufficiently high constant and (cf. Corollary \ref{corollary:properties-of-Partition}) have $d(v) \le \ell + \ell^{0.7}$, by Lemma \ref{lem:conc}, we obtain $\Prob{|Z-\Exp{Z}| \ge \ell^{0.6}} \le \ell^{-3}$.
	\junk{
		\begin{align*}
		\Prob{|Z-\mu| \ge \ell^{0.6}}
		&\le
		2 \left(\frac{c d(v)}{\ell^{1.2}}\right)^{c/2}
		\le
		2 \left(\frac{c (\ell + \ell^{0.7})}{\ell^{1.2}}\right)^{c/2}
		\le
		2 (2 c \ell^{-0.2})^{c/2}
		\le
		\ell^{-3}
		\enspace.
		\end{align*}
	}
	Therefore, since $\Exp{Z} = d(v) \cdot \ell^{-0.1}$ and $Z = d'(v)$, we can conclude the required claim. 
\end{proof}

We now prove that the probability that nodes in bins $1, \dots, \ell^{0.1}-1$ do not receive enough colors to color themselves is low.

\begin{lemma}\label{lem:badnode2}
	The probability that a node $v$ with $h_1(v) \ne \ell^{0.1}$ satisfies $p'(v) \le p(v) \ell^{-0.1} + \ell^{0.7}$ is at most $\ell^{-3}$.
\end{lemma}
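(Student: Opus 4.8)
The plan is to mimic the structure of the proof of Lemma~\ref{lem:badnode1}, but applied to the color palette of $v$ rather than to its neighborhood, with the crucial difference that here the relevant hash function is $h_2$ (which distributes colors into the $\ell^{0.1}-1$ non-final bins) rather than $h_1$. Concretely, fix a node $v$ with $h_1(v) = b \ne \ell^{0.1}$, and let $\{\gamma_1,\dots,\gamma_{p(v)}\}$ be its palette. For each $j$, let $Z_j$ be the indicator of the event $h_2(\gamma_j) = b$. Then $p'(v) = Z_1 + \dots + Z_{p(v)} =: Z$, and since $\mathcal{H}_2$ is $c$-wise independent, the $Z_j$ are $c$-wise independent (note these are functions of $h_2$ alone, so there is no interaction with the conditioning on $h_1(v)=b$), each with expectation $(\ell^{0.1}-1)^{-1}$, so $\mu = \Exp{Z} = p(v)/(\ell^{0.1}-1)$.

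Next I would apply Lemma~\ref{lem:conc} with deviation $\lambda$ chosen just below $\ell^{0.7}$ — say $\lambda = \tfrac{1}{2}\ell^{0.7}$ — to get $\Prob{|Z-\mu| \ge \lambda} \le 2(c\cdot p(v)/\lambda^2)^{c/2}$. Here I need an upper bound on $p(v)$: this is where I have to be a little careful, since unlike $d(v)$, the palette size $p(v)$ is not directly bounded by $\ell + \ell^{0.7}$ in Corollary~\ref{corollary:properties-of-Partition}. However, for the $(\Delta+1)$-list coloring problem we may assume $p(v) \le \Delta + 1 \le \nin$, and more to the point, we only care about the regime where the instance has not yet been collected, so $\ell$ is at least a large constant and $p(v)$ is polynomially bounded in $\ell$ (indeed at the top level $p(v) = \ell+1$, and the invariant only shrinks palettes). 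If $p(v)$ could be as large as $\nin^2$ the concentration bound would still give an inverse-polynomial-in-$\nin$ failure probability for suitably large $c$, which suffices; but the cleaner route is to observe $p(v) = O(\ell)$ throughout (palettes start at $\Delta+1$ and the invariant keeps $p'(v) \le $ roughly $p(v)\ell^{-0.1} + O(\ell^{0.7})$, so they stay $O(\ell)$), giving $\Prob{|Z-\mu| \ge \tfrac12\ell^{0.7}} \le 2(O(\ell)\cdot \ell^{-1.4})^{c/2} = 2\,O(\ell^{-0.4})^{c/2} \le \ell^{-3}$ for $c$ a sufficiently large constant.

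Finally I would translate the concentration statement back to the claim. We have $p'(v) = Z \ge \mu - \lambda = p(v)/(\ell^{0.1}-1) - \tfrac12\ell^{0.7} > p(v)\ell^{-0.1} - \tfrac12\ell^{0.7}$, except with probability at most $\ell^{-3}$. This is \emph{stronger} than the complement of the bad event $p'(v) \le p(v)\ell^{-0.1} + \ell^{0.7}$, so a fortiori the bad event has probability at most $\ell^{-3}$, as required. The main (minor) obstacle is the bookkeeping around the non-power-of-two range of $h_2$ and the exact value of the mean $(\ell^{0.1}-1)^{-1}$ versus the target $\ell^{-0.1}$: the slack of $\ell^{0.7}$ in the definition of a good node is enormous compared to the $O(\ell^{-0.1}\cdot p(v)) = O(\ell^{0.9})$-vs-$O(\ell^{0.8})$-type discrepancy between $p(v)/(\ell^{0.1}-1)$ and $p(v)\ell^{-0.1}$ and the $O(\nin^{-3})$ rounding error from Section~2.3, so absorbing all of these into the constant in front of $\ell^{0.7}$ (equivalently, picking $\lambda = \tfrac12\ell^{0.7}$ rather than $\ell^{0.7}$) closes the gap without any delicate estimation.
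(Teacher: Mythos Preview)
Your final step has the inequality reversed, and this is not a cosmetic slip: it is the whole point of the lemma. The bad event is $p'(v) \le p(v)\ell^{-0.1} + \ell^{0.7}$, so the good event you must establish with high probability is $p'(v) > p(v)\ell^{-0.1} + \ell^{0.7}$, with the $\ell^{0.7}$ \emph{added}. You derive (whp) only $p'(v) > p(v)\ell^{-0.1} - \tfrac12\ell^{0.7}$, which is strictly weaker, not stronger; it does not exclude the bad event at all. Your closing paragraph, which argues that the $\ell^{0.7}$ ``slack'' absorbs the various discrepancies, is based on this same sign confusion. The $+\ell^{0.7}$ is a hurdle you have to clear, not a cushion you get to spend.

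The paper clears that hurdle precisely by \emph{not} throwing away the gap between $\mu = p(v)/(\ell^{0.1}-1)$ and $p(v)\ell^{-0.1}$. Because colors are hashed into only $\ell^{0.1}-1$ bins (the last bin gets none), the mean exceeds the target $p(v)\ell^{-0.1}$ by $p(v)/(\ell^{0.1}(\ell^{0.1}-1)) \ge p(v)/\ell^{0.2}$. Using only the invariant $p(v) > \ell$, this excess is at least $\ell^{0.8}$, which comfortably dominates both the required $+\ell^{0.7}$ and a deviation term. Concretely, the paper takes $\lambda = p(v)^{0.6}$ (not $\tfrac12\ell^{0.7}$), so the concentration bound becomes $2(c\,p(v)/p(v)^{1.2})^{c/2} \le p(v)^{-3} \le \ell^{-3}$ with \emph{no upper bound on $p(v)$ needed}; and then one checks $p(v)/\ell^{0.2} > p(v)^{0.6} + \ell^{0.7}$, which follows from $p(v) > \ell$.

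This also exposes a second, independent gap in your argument: your choice $\lambda = \tfrac12\ell^{0.7}$ makes the concentration bound $2(c\,p(v)/\ell^{1.4})^{c/2}$, which is only small if $p(v) = o(\ell^{1.4})$. Your claim that $p(v) = O(\ell)$ is not supported by the paper's invariant (Corollary~\ref{corollary:properties-of-Partition} gives only the \emph{lower} bound $p(v) > \ell$; Definition~\ref{def:good-and-bad} imposes no upper bound on $p'(v)$), and in deep recursion levels $p(v)$ can indeed be a large power of $\ell$. The paper's choice $\lambda = p(v)^{0.6}$ sidesteps this entirely.
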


\begin{proof}
	We apply Lemma \ref{lem:conc}, with $Z_1,\dots,Z_{p(v)}$ as indicator random variables for the events that each color in $v$'s palette is hashed to its bin. These variables are $(c-1)$-wise independent and each have expectation $\frac{1}{\ell^{0.1}-1}$. Since we assume $c$ to be a sufficiently high constant, by Lemma \ref{lem:conc}, we obtain $\Prob{|Z-\Exp{Z}| \ge p(v)^{0.6}} \le p(v)^{-3}$.
	\junk{
		\begin{align*}
		\Prob{|Z-\mu| \ge p(v)^{0.6}}
		&\le
		2 \left(\frac{c p(v)}{p(v)^{1.2}}\right)^{c/2}
		=
		2 (2 c p(v)^{-0.2})^{c/2}
		\le
		p(v)^{-3}
		\enspace.
		\end{align*}
	}
	Therefore, since $Z = p'(v)$ and $\Exp{Z} = \frac{p(v)}{\ell^{0.1}-1}$, the probability that $p'(v) \le \frac{p(v)}{\ell^{0.1}-1} - p(v)^{0.6}$ is at most $p(v)^{-3}$. Furthermore, since $\ell < p(v)$ (cf. Corollary \ref{corollary:properties-of-Partition}), we get,
	\begin{align*}
	\frac{p(v)}{\ell^{0.1}-1} - \frac{p(v)}{\ell^{0.1}}
	&=
	\frac{p(v)}{\ell^{0.1}(\ell^{0.1}-1)}
	\ge
	\frac{p(v)}{\ell^{0.2}}
	\ge
	\frac12 \left(\frac{p(v)}{p(v)^{0.2}} + \frac{\ell}{\ell^{0.2}}\right)
	>
	p(v)^{0.6} + \ell^{0.7}
	\enspace.
	\end{align*}
	Hence, with probability at least $1 - p(v)^{-3} \ge 1 - \ell^{-3}$ we obtain the following,
	\begin{align*}
	p'(v)
	&>
	\frac{p(v)}{\ell^{0.1} - 1} - p(v)^{0.6}
	>
	\frac{p(v)}{\ell^{0.1}} + \ell^{0.7}
	\enspace.
	\qedhere
	\end{align*}
\end{proof}

We can now combine Lemmas \ref{lem:badnode1} and \ref{lem:badnode2} to bound the probability that nodes are bad:

\begin{lemma}\label{lem:badnode3}
	A node is bad with probability at most $2\ell^{-3}$.
\end{lemma}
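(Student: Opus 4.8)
The plan is to combine the two failure modes identified in Lemmas~\ref{lem:badnode1} and~\ref{lem:badnode2} via a union bound, being careful about which definition of ``good'' applies to which bin. Recall from Definition~\ref{def:good-and-bad} that a node $v$ in bins $[\ell^{0.1}-1]$ is good iff \emph{both} $|d'(v)-d(v)\ell^{-0.1}|\le \ell^{0.6}$ \emph{and} $p'(v)\ge p(v)\ell^{-0.1}+\ell^{0.7}$, whereas a node $v$ in bin $\ell^{0.1}$ is good iff only the degree condition $|d'(v)-d(v)\ell^{-0.1}|\le \ell^{0.6}$ holds. So a node is bad iff it violates the degree condition, or (if it lands in a bin other than $\ell^{0.1}$) it violates the palette condition.

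First I would fix a node $v$ and condition on the value of $h_1(v)$, i.e.\ on which bin $v$ itself is mapped to. If $h_1(v)=\ell^{0.1}$, then $v$ is bad only if the event of Lemma~\ref{lem:badnode1} occurs, which has probability at most $\ell^{-3}\le 2\ell^{-3}$. If $h_1(v)\ne \ell^{0.1}$, then $v$ is bad only if the event of Lemma~\ref{lem:badnode1} occurs \emph{or} the event of Lemma~\ref{lem:badnode2} occurs; by the union bound this has probability at most $\ell^{-3}+\ell^{-3}=2\ell^{-3}$. In either case the probability that $v$ is bad is at most $2\ell^{-3}$, and since this bound holds conditioned on each possible value of $h_1(v)$, it holds unconditionally, which is exactly the claim.

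One point to check is that Lemmas~\ref{lem:badnode1} and~\ref{lem:badnode2} are stated as unconditional probabilities over the random choice of $h_1$ (and $h_2$), so one should verify that they remain valid—or be re-derived—conditioned on the value of $h_1(v)$. This is immediate: in the proof of Lemma~\ref{lem:badnode1} the relevant variables are the indicators that each of $v$'s neighbors hashes to $v$'s bin, and their joint distribution (and $(c-1)$-wise independence, with expectation $\ell^{-0.1}$ each) is unchanged once we fix $h_1(v)$, since $h_1$ is $c$-wise independent; similarly in Lemma~\ref{lem:badnode2} the relevant variables depend only on $h_2$, which is independent of $h_1$. I do not anticipate any real obstacle here—the only thing to get right is the case split on $h_1(v)$ and the observation that in the $h_1(v)=\ell^{0.1}$ case the palette condition is simply not part of the definition of ``good,'' so only the single failure event of Lemma~\ref{lem:badnode1} contributes.
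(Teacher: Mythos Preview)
Your proposal is correct and follows essentially the same approach as the paper: the paper's proof is simply the observation that a bad node must witness at least one of the two failure events bounded in Lemmas~\ref{lem:badnode1} and~\ref{lem:badnode2}, followed by a union bound. Your case split on $h_1(v)$ and the discussion of conditioning are more careful than what the paper writes, but the underlying argument is the same.
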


\begin{proof}
	For a node $v$ to be bad, at least one of the bad events described by Lemmas \ref{lem:badnode1} and \ref{lem:badnode2} must occur. By the union bound, the probability of this is at most $2\ell^{-3}$.
\end{proof}

This provides a bound on the cost of random hash function pairs:

\begin{lemma}\label{lemma:expected-quality}
	The expected cost of a random hash function pair from $\mathcal{H}_1 \times \mathcal{H}_2$ is at most $\frac{\nin}{\ell^2}$.
\end{lemma}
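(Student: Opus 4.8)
The plan is to bound the expectation of $\qual(h_1,h_2)$ by splitting it into its two summands, $|\{\text{bad nodes}\}|$ and $\nin\cdot|\{\text{bad bins}\}|$, and bounding each via linearity of expectation using the per-object probability bounds already established. First I would handle the bad-bin term: by Lemma \ref{lem:badbin}, the probability that \emph{any} bin is bad is at most $\nin^{-2}$, so the expected number of bad bins is at most $\ell^{0.1}\cdot\Prob{\text{bin }b\text{ is bad}}$ summed over bins, which is crudely at most $\nin^{-2}$ (or, even more simply, $\Exp{|\{\text{bad bins}\}|}\le \nin\cdot\Prob{\exists\text{ bad bin}}\le \nin^{-1}$ suffices here). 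Multiplying by the weight $\nin$ contributes at most $\nin\cdot\nin^{-2}\cdot(\text{poly factors})$, which is $o(1)$ and in particular far below $\frac{\nin}{\ell^2}$ once we recall (from the remark after Lemma \ref{lemma:properties-of-Partition}) that $\ell$ may be assumed to be at most $\nin$; I should be slightly careful to phrase the bin bound so the arithmetic cleanly lands below, say, half of $\frac{\nin}{\ell^2}$.

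Next I would handle the bad-node term. By Lemma \ref{lem:badnode3}, each individual node is bad with probability at most $2\ell^{-3}$, so by linearity of expectation the expected number of bad nodes is at most $n_G\cdot 2\ell^{-3}$. Since $n_G\le\nin$, this is at most $2\nin\ell^{-3}$. Combining, $\Exp{\qual(h_1,h_2)}\le 2\nin\ell^{-3} + \nin\cdot\Exp{|\{\text{bad bins}\}|}$, and I would then argue that for $\ell$ at least a sufficiently large constant (which we may assume, per the remark following Lemma \ref{lemma:properties-of-Partition}), the dominant term $2\nin\ell^{-3}$ is at most $\frac{\nin}{\ell^2}$ — indeed $2\ell^{-3}\le \ell^{-2}$ whenever $\ell\ge 2$ — leaving comfortable room to absorb the negligible bin contribution. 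One clean way to present this: $\Exp{\qual}\le 2\nin\ell^{-3}+o(1)\le \nin\ell^{-3}+\nin\ell^{-3}\le \tfrac{\nin}{2\ell^2}+\tfrac{\nin}{2\ell^2}=\frac{\nin}{\ell^2}$, where the two halves absorb the node term and the (much smaller) bin term respectively.

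The only real subtlety — not so much an obstacle as a bookkeeping point — is making sure the weight $\nin$ on the bad-bin count does not cause trouble: naively $\nin\cdot\Prob{\text{some bin bad}}\le\nin\cdot\nin^{-2}=\nin^{-1}$, which is genuinely negligible, so the bad-bin contribution never threatens the $\frac{\nin}{\ell^2}$ budget. I would also double-check that Lemma \ref{lem:badnode3} applies: it requires Corollary \ref{corollary:properties-of-Partition}, which holds for every call to \textsc{Partition}, so the hypotheses $\ell<p(v)$, $d(v)\le\ell+\ell^{0.7}$, $d(v)<p(v)$ needed by Lemmas \ref{lem:badnode1} and \ref{lem:badnode2} are in force. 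With these pieces assembled the bound follows immediately; there is no hard step, only the standard linearity-of-expectation aggregation of the three tail bounds proved above.
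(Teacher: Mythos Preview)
Your proposal is correct and follows essentially the same approach as the paper: split $\qual$ into the bad-node and bad-bin terms, apply linearity of expectation with Lemma~\ref{lem:badnode3} for nodes and Lemma~\ref{lem:badbin} for bins, and observe that $2\nin\ell^{-3}+\nin\cdot\nin^{-2}\le \nin/\ell^2$ for $\ell$ (and $\nin$) sufficiently large. The paper compresses this into a single displayed inequality, but the content is identical.
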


\begin{proof}
	Combining Lemmas \ref{lem:badbin} and \ref{lem:badnode3} gives the following:
	\begin{align*}
	\Exp{\qual(h_1,h_2)} &=
	\Exp{|\{\text{bad nodes}\}| + \nin \cdot |\{\text{bad bins}\}|}
	\le
	2 \nin \ell^{-3} + \nin \cdot \nin^{-2}
	\le
	\frac{\nin}{\ell^2}
	\enspace.\qedhere
	\end{align*}
\end{proof}

\subsection{Choosing appropriate hash functions}
\label{subsec:choice-of-hash-functions}

In this section, we discuss how to deterministically choose hash functions $h_1 \in \mathcal{H}_1$ and $h_2 \in \mathcal{H}_2$ to ensure the desired properties.
First, let us recall that by Lemma \ref{lem:hash}, we can assume that families of hash functions $\mathcal{H}_1$ and $\mathcal{H}_2$ have size $\nin^{O(1)}$. Next, by Lemma \ref{lemma:expected-quality}, upon choosing a random pair of hash functions $h_1 \in \mathcal{H}_1$ and $h_2 \in \mathcal{H}_2$, we have $\Exp{\qual(h_1,h_2)} \le \frac{\nin}{\ell^2}$. Therefore, by the method of conditional expectations (cf. Section \ref{sec:condexp}), we can find a pair of hash functions $h_1 \in \mathcal{H}_1$ and $h_2 \in \mathcal{H}_2$ with cost $\qual(h_1,h_2) \le \frac{\nin}{\ell^2}$ in $O(1)$ rounds. By (\ref{def:quality}), this ensures that there are no bad bins and that there are at most $\frac{\nin}{\ell^2}$ bad nodes.

In order to implement the method of conditional expectations to find such an $h_1 \in \mathcal{H}_1$ and $h_2 \in \mathcal{H}_2$, we require that $\qual(h_1,h_2)$ is the sum of functions computable by individual machines. To achieve this, we will distribute information between the machines as follows: each node $v$ will be assigned a machine $x_v$, which will store all of its adjacent edges and its palette. The machine will then be able to determine whether a particular pair of hash functions $h_1,h_2$ results in the node being \emph{bad}, since this depends only on $v$'s palette, neighbor IDs, and the hash functions. The local cost function $q_{x_v}(h_1,h_2)$ will then simply be an indicator variable for the event that $v$ is \emph{bad} under $h_1$ and $h_2$.

We will also assign a special machine $x^*$ to check if any bins are bad (contain at least $2n_G \ell^{-0.1} + \nin^{0.6}$ nodes) for any hash function $h_1$, for which it needs only to store the IDs of each node. The local cost function $q_{x^*}(h_1,h_2)$ is the number of bad bins under $h_1$, multiplied by $\nin$, which is computable by $x^*$.

The global cost function $\qual(h_1,h_2)=|\{\text{bad nodes under }h_1,h_2\}| + \nin \cdot |\{\text{bad bins under }h_1\}|$ is then the sum of these local cost functions computable by individual machines. If we in fact use machines for multiple nodes where space allows, then (even among all concurrent instances) we will require only $O(1+\frac{\nim}{\nin})$ machines, i.e., total space of $O(\nim+\nin)$.

\begin{lemma}\label{lem:deterministic-hashing}
	In $O(1)$ \MPC rounds with $O(\nin)$ local space per machine and $O(\nim+\nin)$ total space, one can select hash functions $h_1 \in \mathcal{H}_1$, $h_2 \in \mathcal{H}_2$ such that in a single call to \textsc{Partition},
	\begin{itemize}
		\item there are no bad bins, and
		\item there are at most $\frac{\nin}{\ell^2}$ bad nodes.
	\end{itemize}
\end{lemma}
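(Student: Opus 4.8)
The plan is to assemble Lemma~\ref{lem:deterministic-hashing} directly from the pieces already in place: the expected-cost bound of Lemma~\ref{lemma:expected-quality}, the method of conditional expectations described in Section~\ref{sec:condexp}, and the decomposition of $\qual(h_1,h_2)$ into per-machine local cost functions sketched in the paragraphs preceding the statement. Concretely, I would first set up the data distribution: assign each node $v$ a machine $x_v$ holding $v$'s incident edges and palette, so that $q_{x_v}(h_1,h_2)$, the indicator that $v$ is bad under $(h_1,h_2)$, is computable locally (it depends only on $h_1$, $h_2$, $v$'s neighbor IDs, and $v$'s palette, which by Corollary~\ref{corollary:properties-of-Partition} has size at most $\nin^2$ but can be represented implicitly); and assign one special machine $x^*$ holding all node IDs, which computes $q_{x^*}(h_1,h_2) = \nin \cdot |\{\text{bad bins under } h_1\}|$. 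Then $\qual(h_1,h_2) = q_{x^*}(h_1,h_2) + \sum_v q_{x_v}(h_1,h_2)$ is exactly a sum of machine-local functions, which is the form required by Section~\ref{sec:condexp}.

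Next I would invoke Lemma~\ref{lem:hash} to assume $\mathcal{H}_1$ and $\mathcal{H}_2$ each have size $\nin^{O(1)}$, so that a pair $(h_1,h_2)$ is specified by an $O(\log\nin)$-bit seed. By Lemma~\ref{lemma:expected-quality}, $\mathbf{E}_{(h_1,h_2)}[\qual(h_1,h_2)] \le \nin/\ell^2$. Applying the method of conditional expectations verbatim from Section~\ref{sec:condexp}: fix the seed in chunks of $\delta\log\nin$ bits, at each step having every machine compute its conditional expectation $\mathbf{E}[q_x(h) \mid \Xi_{h_{pre}+i}]$ for all $\nin^\delta$ continuations $i$ in parallel, summing over machines via Lemma~\ref{lem:comm}, and fixing a continuation that does not increase the conditional expectation. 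After $O(1)$ chunks the entire seed is fixed and the resulting pair $(h_1,h_2)$ satisfies $\qual(h_1,h_2) \le \nin/\ell^2$. By the definition~\eqref{def:quality} of $\qual$, since the bad-bin term is weighted by $\nin$ and $\nin/\ell^2 < \nin$ (as $\ell$ is at least a large constant), there can be no bad bins, and consequently the bad-node count is at most $\nin/\ell^2$. This yields both bullet points.

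For the resource bounds I would argue: each machine $x_v$ needs $O(d(v)+p(v))$ space, but palettes need not be stored explicitly (a node in bin $[\ell^{0.1}-1]$ is bad only via the $h_2$-count of palette colors in its bin, which can be streamed), so $O(\nin)$ local space suffices, and $x^*$ needs $O(n_G) = O(\nin)$ space for node IDs; packing several nodes onto one machine where space allows uses $O(1+\nim/\nin)$ machines, hence $O(\nim+\nin)$ total space, matching the claimed bound (and comfortably within the linear-space regime). The conditional-expectations search runs in $O(1)$ rounds by Section~\ref{sec:condexp}, and choosing the Lemma~\ref{lem:comm} parameter $\delta$ sufficiently smaller than the space parameter lets us run all the per-$i$ sums, and indeed all concurrent \textsc{Partition} instances, simultaneously within the space budget.

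The main obstacle I anticipate is not any deep mathematical step — the probabilistic content is entirely in Lemma~\ref{lemma:expected-quality} — but rather making the "locally computable" claim fully rigorous under the space constraint: one must check that each machine can actually evaluate its conditional expectation $\mathbf{E}[q_x(h)\mid\Xi_{h_{pre}+i}]$ given only a partially-fixed seed. Since $q_{x_v}$ is a $0/1$ indicator, this conditional expectation is the probability over the free seed bits that $v$ is bad, which in principle requires averaging over all completions of the seed; one must observe that this can be computed exactly because, for a $c$-wise independent family with $c$ constant, the relevant event depends on only the joint distribution of a constant number of hash values (e.g., via computing $\mathbf{E}$ and higher moments of the relevant sums, or directly since the hash family is an explicit small-bias-type construction), so the machine can evaluate it in $\polylog(\nin)$ time. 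Spelling out this evaluation — or, alternatively, noting that it follows from the generic treatment in \cite{CDP19} — is the only point needing care; everything else is bookkeeping.
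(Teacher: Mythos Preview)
Your proposal is correct and follows essentially the same approach as the paper: you set up the same per-node machines $x_v$ and special bin-checking machine $x^*$, express $\qual$ as the sum of their local costs, invoke Lemma~\ref{lemma:expected-quality} and the method of conditional expectations of Section~\ref{sec:condexp}, and then read off both bullet points from the definition of $\qual$. The only difference is that you are more explicit about the computability of the conditional expectations and about palette storage, points the paper leaves implicit.
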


This latter property allows us to bound the size of $G_0$, the graph induced by bad nodes encountered in any recursive~call:

\begin{corollary}\label{cor:G0-is-small}
	Consider an arbitrary call of \textsc{ColorReduce}$(G,\ell)$. Then $G_0$, the graph induced by bad nodes in \textsc{Partition}$(G,\ell)$, is of size $O(\nin)$.
\end{corollary}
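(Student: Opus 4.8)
The plan is to bound the number of nodes in $G_0$ (the graph induced by bad nodes in the call \textsc{Partition}$(G,\ell)$) by a quantity that is $O(\nin)$, and then observe that since each bad node has degree at most $\Delta$ in $\Gin$, but more relevantly degree at most $d(v) \le \ell + \ell^{0.7}$ within $G$ (cf. Corollary \ref{corollary:properties-of-Partition}), the induced subgraph $G_0$ has size $O(n_{G_0}\cdot \ell)$ where $n_{G_0}$ is the number of bad nodes. So the real content is to show that $n_{G_0}\cdot\ell = O(\nin)$, i.e., that the number of bad nodes is $O(\nin/\ell)$.

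First I would invoke Lemma \ref{lem:deterministic-hashing}: the hash functions $h_1,h_2$ chosen deterministically for this \textsc{Partition} call guarantee at most $\frac{\nin}{\ell^2}$ bad nodes. Hence $n_{G_0} \le \frac{\nin}{\ell^2}$, and each such node contributes at most $1 + d(v) \le 1 + \ell + \ell^{0.7} = O(\ell)$ to the size of $G_0$ (counting the node plus its incident edges within $G_0$; edges are only fewer when restricted to $G_0$). Therefore the total size of $G_0$ is at most $\frac{\nin}{\ell^2}\cdot O(\ell) = O(\frac{\nin}{\ell}) = O(\nin)$, since $\ell \ge 1$ (and in fact $\ell$ is at least a large constant in any call that reaches the \textsc{Partition} branch, as noted after Lemma \ref{lemma:properties-of-Partition}).

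One point that needs a word of care: the bound $\frac{\nin}{\ell^2}$ on bad nodes in Lemma \ref{lem:deterministic-hashing} is stated for a single call to \textsc{Partition} on a graph $G$ with the associated degree parameter $\ell$. Here $\nin$ refers to the number of nodes in the \emph{original} input graph $\Gin$, not $n_G$; this is exactly what we want, since the size budget for ``collect onto a single machine'' is $O(\nin)$. Note also that the value of $\ell$ decreases along the recursion but always stays at least a large constant before the base case triggers, so $O(\nin/\ell) = O(\nin)$ holds uniformly over all recursive calls. This makes the argument apply to ``an arbitrary call of \textsc{ColorReduce}$(G,\ell)$'' as the statement requires.

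I do not expect a genuine obstacle here — the corollary is essentially a direct corollary of Lemma \ref{lem:deterministic-hashing} combined with the degree bound of Corollary \ref{corollary:properties-of-Partition}. The only mild subtlety is bookkeeping the difference between $\nin$ and $n_G$, and making sure the degree bound used to convert ``number of bad nodes'' into ``size of $G_0$'' is the within-$G$ degree bound $d(v) \le \ell + \ell^{0.7}$ rather than the (potentially larger, but still fine) global $\Delta$; either works since $\frac{\nin}{\ell^2}\cdot(\ell+\ell^{0.7}+1) = O(\nin)$.
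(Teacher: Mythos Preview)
Your proposal is correct and follows essentially the same approach as the paper: invoke Lemma~\ref{lem:deterministic-hashing} to bound the number of bad nodes by $\frac{\nin}{\ell^2}$, then use the degree bound $d(v)\le \ell+\ell^{0.7}$ from Corollary~\ref{corollary:properties-of-Partition} to bound the total size of $G_0$. The paper crudely upper-bounds the degree by $\ell^2$ to get $\frac{\nin}{\ell^2}\cdot\ell^2 = O(\nin)$, whereas you obtain the slightly sharper $O(\nin/\ell)$, but the argument and conclusion are the same.
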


\begin{proof}
	By Lemma \ref{lem:deterministic-hashing}, in the call to \textsc{Partition}$(G,\ell)$, $G_0$ has at most $\frac{\nin}{\ell^2}$ nodes. Furthermore, by Corollary \ref{corollary:properties-of-Partition} (ii), the degree of each node in the call to \textsc{Partition}$(G,\ell)$ is at most $\ell + \ell^{0.7} < \ell^2$. Therefore, the graph of the bad nodes is of size at most $\frac{\nin}{\ell^2} \cdot \ell^2 = O(\nin)$, and can be collected onto a single \MPC machine and properly colored.
\end{proof}

\subsection{Recursive calls to {\sc ColorReduce} and recursion depth}
\label{sec:recursion}

Our algorithm is recursive, and it starts with the call to \textsc{ColorReduce}$(\Gin,\Delta)$ for the input graph $\Gin$ of maximum degree $\Delta$, where each node $v$ has assigned a color palette of size $p(v)$, $p(v) > d(v)$. Starting with our initial call on a graph with $n_0 = \nin$ nodes and $\ell_0 = \Delta$, we will upper bound the number of nodes and degree of graphs in subsequent recursive calls of depth~$i$. We will denote these values by $n_i$ and $\Delta_i$ respectively, and by $\ell_i$ the value of $\ell$ in a recursive call of depth~$i$.

We begin with the following bound for the values of $\ell_i$ in a recursive call of depth~$i$.

\begin{lemma}\label{lem:elli}
	$\frac12 \Delta^{0.9^i} < \ell_i \le \Delta^{0.9^i}$.
\end{lemma}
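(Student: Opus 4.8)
The claim is a two-sided bound on $\ell_i$, the value of the degree-approximation parameter at recursion depth $i$, and the natural approach is induction on $i$. The base case $i=0$ is immediate: $\ell_0 = \Delta = \Delta^{0.9^0}$, so both inequalities hold (with the lower bound in fact slack). For the inductive step, I would use the recurrence $\ell_{i+1} = \ell_i^{0.9} - \ell_i^{0.6}$ coming from the line ``Let $\ell' = \ell^{0.9} - \ell^{0.6}$'' in \textsc{ColorReduce}, together with the inductive hypothesis $\tfrac12 \Delta^{0.9^i} < \ell_i \le \Delta^{0.9^i}$.

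\textbf{Upper bound.} Since $\ell_{i+1} = \ell_i^{0.9} - \ell_i^{0.6} < \ell_i^{0.9}$ and $x \mapsto x^{0.9}$ is increasing, the inductive hypothesis $\ell_i \le \Delta^{0.9^i}$ gives $\ell_{i+1} < \left(\Delta^{0.9^i}\right)^{0.9} = \Delta^{0.9^{i+1}}$, as required.

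\textbf{Lower bound.} Here I would write $\ell_{i+1} = \ell_i^{0.9}\left(1 - \ell_i^{-0.3}\right)$. From the inductive hypothesis $\ell_i > \tfrac12 \Delta^{0.9^i}$, and since (by the remark following Lemma~\ref{lemma:properties-of-Partition}) we may assume $\ell_i$ is at least a sufficiently large constant, the factor $\left(1 - \ell_i^{-0.3}\right)$ is very close to $1$ — certainly at least, say, $2^{-1/10}$. Combined with $\ell_i^{0.9} > \left(\tfrac12 \Delta^{0.9^i}\right)^{0.9} = 2^{-0.9}\,\Delta^{0.9^{i+1}}$, this yields $\ell_{i+1} > 2^{-0.9}\cdot 2^{-1/10}\,\Delta^{0.9^{i+1}} = 2^{-1}\,\Delta^{0.9^{i+1}} = \tfrac12\Delta^{0.9^{i+1}}$, closing the induction. (One should double-check the constants work out; slightly more care gives a cleaner bound, but any argument of this shape suffices, since the recursion has only constant depth and the exact constant in the lower bound is immaterial.)

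\textbf{Main obstacle.} There is no real obstacle here — the only point requiring a little attention is ensuring the lower-bound constant $\tfrac12$ is genuinely preserved rather than degrading with $i$. This works precisely because the multiplicative loss from the $(1-\ell_i^{-0.3})$ correction term is negligible (it is $o(1)$ as $\ell_i$ grows, and $\ell_i$ is lower-bounded by a large constant throughout the constantly-many recursion levels), so the dominant effect is simply raising to the power $0.9$, under which the constant $\tfrac12$ shrinks only to $2^{-0.9} > \tfrac12$, leaving room to absorb the correction. One could alternatively prove a weaker lower bound like $\tfrac13\Delta^{0.9^i}$ if the bookkeeping is cleaner, but $\tfrac12$ is attainable as sketched.
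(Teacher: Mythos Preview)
Your proposal is correct and follows essentially the same inductive approach as the paper. The only cosmetic difference is in the lower-bound step: the paper bounds $\ell_j^{0.9}$ from below (using the inductive lower bound) and $\ell_j^{0.6}$ from above (using the inductive upper bound) separately to get $\ell_{j+1} > 2^{-0.9}\Delta^{0.9^{j+1}} - \Delta^{0.6\cdot 0.9^j} > \tfrac12\Delta^{0.9^{j+1}}$, whereas you factor as $\ell_i^{0.9}(1-\ell_i^{-0.3})$ and bound the correction factor multiplicatively; both routes rely on $\Delta$ (hence each $\ell_i$) being at least a sufficiently large constant, and both exploit the slack $2^{-0.9} > \tfrac12$ in exactly the same way.
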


\begin{proof}
	The proof is by induction on $i$. Clearly the claim is true for $\ell_0 = \Delta$.
	
	Assuming that the claim holds for $i = j$, for $i = j + 1$ we have the following,
	\begin{align*}
	\ell_{j+1}
	&=
	\ell_j^{0.9} - \ell_j^{0.6}
	\le
	\ell_j^{0.9}
	\le
	\Delta^{0.9 \cdot 0.9^j}
	=
	\Delta^{0.9^{j+1}}
	\end{align*}
	
	and (assuming $\Delta$ is greater than a sufficiently large constant)
	\begin{align*}
	\ell_{j+1}
	&=
	\ell_j^{0.9} - \ell_j^{0.6}
	>
	\frac{1}{2^{0.9}} \Delta^{0.9 \cdot 0.9^j} - \Delta^{0.6 \cdot 0.9^j}
	>
	\frac12 \Delta^{0.9^{j+1}}
	\enspace.\qedhere
	\end{align*}
\end{proof}

Next, we bound the number of nodes $n_i$ in subsequent recursive calls of depth~$i$.

\begin{lemma}\label{lem:ni}
	$n_i \le 3^i (\nin \Delta^{0.9^i - 1} + \nin^{0.6})$.
\end{lemma}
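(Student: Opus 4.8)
The plan is to prove the bound $n_i \le 3^i(\nin\Delta^{0.9^i-1}+\nin^{0.6})$ by induction on the recursion depth $i$, tracking how \textsc{Partition} splits the node set. The base case $i=0$ is immediate: $n_0 = \nin = \nin\Delta^{0}$, and $3^0(\nin\Delta^{0.9^0-1}+\nin^{0.6}) = \nin\Delta^{0}+\nin^{0.6} \ge \nin$. For the inductive step, I would assume the bound holds at depth $j$ and bound the size of each graph handed to a depth-$(j+1)$ recursive call. The graphs arising at depth $j+1$ are the $G_1,\dots,G_{\ell_j^{0.1}}$ produced by \textsc{Partition}$(G,\ell_j)$ on a depth-$j$ graph $G$ with $n_G \le n_j$ nodes. (The graph $G_0$ of bad nodes is not recursed on — it is collected and colored directly — so it does not affect $n_{j+1}$.)

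The key step is the size bound on a single bin. By Lemma~\ref{lem:deterministic-hashing}, the chosen hash functions produce \emph{no bad bins}, so every bin contains fewer than $2n_G\ell_j^{-0.1}+\nin^{0.6}$ nodes. Using $n_G \le n_j$ and the inductive hypothesis $n_j \le 3^j(\nin\Delta^{0.9^j-1}+\nin^{0.6})$, together with the bound $\ell_j > \tfrac12\Delta^{0.9^j}$ from Lemma~\ref{lem:elli} (so $\ell_j^{-0.1} < 2^{0.1}\Delta^{-0.1\cdot 0.9^j} < 2\Delta^{-0.1\cdot 0.9^j}$), I would estimate
\begin{align*}
n_{j+1} &< 2n_j\ell_j^{-0.1}+\nin^{0.6}
\le 2\cdot 3^j(\nin\Delta^{0.9^j-1}+\nin^{0.6})\cdot 2\Delta^{-0.1\cdot 0.9^j}+\nin^{0.6}\\
&= 4\cdot 3^j\,\nin\,\Delta^{0.9^j - 0.1\cdot 0.9^j - 1} + 4\cdot 3^j\,\nin^{0.6}\,\Delta^{-0.1\cdot 0.9^j} + \nin^{0.6}\enspace.
\end{align*}
The exponent in the first term is $0.9\cdot 0.9^j - 1 = 0.9^{j+1}-1$, exactly as needed, and $\Delta^{-0.1\cdot 0.9^j} \le 1$ bounds the second term. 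So the right-hand side is at most $4\cdot 3^j\,\nin\,\Delta^{0.9^{j+1}-1} + (4\cdot 3^j+1)\nin^{0.6} \le 3^{j+1}(\nin\Delta^{0.9^{j+1}-1}+\nin^{0.6})$, since $4\cdot 3^j < 3^{j+1}$ and $4\cdot 3^j + 1 \le 3^{j+1}$. This closes the induction.

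The main obstacle — though it is more a matter of care than of difficulty — is getting the constant factors to fit: the factor $2$ from ``no bad bins'', the factor $2$ from the $\ell_j^{-0.1}$-to-$\Delta^{-0.1\cdot 0.9^j}$ conversion (via Lemma~\ref{lem:elli}), and the $+\nin^{0.6}$ additive term must all be absorbed into the jump from $3^j$ to $3^{j+1}$. One should double-check that $4 < 3$ fails but $4\cdot 3^j < 3^{j+1} = 3\cdot 3^j$ holds for all $j\ge 1$, and handle the additive $\nin^{0.6}$ terms by noting $4\cdot 3^j\cdot\Delta^{-0.1\cdot 0.9^j} + 1 \le 4\cdot 3^j + 1 \le 3^{j+1}$ (again using $j \ge 1$, or simply $\Delta$ large enough that the first bin-splitting step already shrinks things). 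I would also remark that this lemma is only invoked for $i$ up to the recursion depth (which Lemma~\ref{lem:9-calls} pins at $9$), so we never need $i$ beyond a constant; this is reassuring but not actually used in the proof of the bound itself.
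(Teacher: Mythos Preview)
Your overall strategy matches the paper's exactly: induction on $i$, base case trivial, inductive step via the good-bin bound $n_{j+1} \le 2n_j\ell_j^{-0.1}+\nin^{0.6}$ combined with Lemma~\ref{lem:elli}. However, there is a genuine arithmetic error in your constant-tracking that breaks the induction as written.

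You correctly derive $\ell_j^{-0.1} < 2^{0.1}\Delta^{-0.1\cdot 0.9^j}$ from Lemma~\ref{lem:elli}, but then you gratuitously weaken $2^{0.1}$ to $2$. This produces a leading coefficient of $4\cdot 3^j$, and you then assert $4\cdot 3^j < 3^{j+1} = 3\cdot 3^j$ ``for all $j\ge 1$'' --- but this is false for every $j$, since $4>3$. (You even flag ``$4<3$ fails'' and then claim the product inequality holds anyway; it does not.) Likewise $4\cdot 3^j + 1 \le 3^{j+1}$ fails for all $j$.

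The fix is simply not to weaken: keep the factor $2^{0.1}$, so the leading coefficient is $2\cdot 2^{0.1}\cdot 3^j \approx 2.14\cdot 3^j < 3\cdot 3^j = 3^{j+1}$, exactly as the paper does. The additive $\nin^{0.6}$ terms then give $2\cdot 2^{0.1}\cdot 3^j\,\Delta^{-0.1\cdot 0.9^j} + 1 \le 3^{j+1}$, which holds since $\Delta$ may be assumed at least a sufficiently large constant (so $\Delta^{-0.1\cdot 0.9^j}$ is bounded away from $1$ over the constant recursion depth). With that single change your argument goes through and is essentially identical to the paper's.
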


\begin{proof}
	The proof is by induction on $i$. As a base case, $n_0 =\nin \le 3^0 (n \Delta^{0.9^{0} -1} + \nin^{0.6})=\nin+\nin^{0.6}$.
	
	Assume that the claim holds for $i = j$. Then for $i = j + 1$, by induction, by the definition of good bins (cf. Definition~\ref{def:good-and-bad}), and by Lemma \ref{lem:elli}, we obtain
	\begin{align*}
	n_{j+1}
	&\le
	2 n_j \ell_j^{-0.1} + \nin^{0.6}
	\le
	2 \left(3^j (\nin \Delta^{0.9^j - 1} + \nin^{0.6})\right) \left(2^{0.1} \Delta^{-0.1 \cdot 0.9^j}\right) + \nin^{0.6}
	\\
	&\le
	3^{j+1} \nin \Delta^{0.9^j - 1 - 0.1 \cdot 0.9^j} + 3^{j+1} \nin^{0.6}
	=
	3^{j+1} \nin \Delta^{0.9^{j+1} - 1} + 3^{j+1} \nin^{0.6}
	\enspace.\qedhere
	\end{align*}
\end{proof}

We can similarly prove a bound on degree of graphs in recursive calls of depth~$i$.

\begin{lemma}\label{lem:deltai}
	$\Delta_i \le 2^i \Delta^{0.9^i}$.
\end{lemma}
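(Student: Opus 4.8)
The plan is to prove Lemma~\ref{lem:deltai}, the bound $\Delta_i \le 2^i \Delta^{0.9^i}$ on the maximum degree of graphs at recursion depth $i$, by induction on $i$, mirroring the structure of the proofs of Lemmas~\ref{lem:elli} and~\ref{lem:ni}. The base case is $\Delta_0 = \Delta = 2^0 \Delta^{0.9^0}$, which is immediate. For the inductive step, I would invoke Corollary~\ref{corollary:properties-of-Partition}, which guarantees that every input instance to a \textsc{Partition} call at depth $j$ with parameter $\ell_j$ satisfies $d(v) \le \ell_j + \ell_j^{0.7}$ for all nodes $v$; equivalently, the graph at depth $j+1$ (which is produced from such a \textsc{Partition} call on good nodes) inherits the degree bound through Lemma~\ref{lemma:properties-of-Partition}(ii), namely $d'(v) \le \ell' + \ell'^{0.7}$ where $\ell' = \ell_j^{0.9} - \ell_j^{0.6} = \ell_{j+1}$.

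Concretely, the key chain of inequalities is: $\Delta_{j+1} \le \ell_{j+1} + \ell_{j+1}^{0.7}$ by Lemma~\ref{lemma:properties-of-Partition}(ii) (or Corollary~\ref{corollary:properties-of-Partition} applied at depth $j+1$); then $\ell_{j+1} \le \Delta^{0.9^{j+1}}$ by the upper bound in Lemma~\ref{lem:elli}; and finally $\ell_{j+1} + \ell_{j+1}^{0.7} \le 2 \ell_{j+1} \le 2 \Delta^{0.9^{j+1}}$, provided $\ell_{j+1} \ge 1$ (which holds since $\Delta$, and hence $\ell_{j+1}$ by Lemma~\ref{lem:elli}, is at least a sufficiently large constant — as already noted after Lemma~\ref{lemma:properties-of-Partition}). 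This gives $\Delta_{j+1} \le 2 \Delta^{0.9^{j+1}}$, which is even stronger than the claimed $2^{j+1} \Delta^{0.9^{j+1}}$. I would present the weaker bound $2^i \Delta^{0.9^i}$ as stated (since it is what later parts of the paper presumably use and it makes the induction slack trivial), but note that in fact the tighter $2\Delta^{0.9^i}$ holds.

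The only subtlety worth being careful about is making sure the degree bound genuinely propagates: the graph $G_i$ at depth $i$ is one of the bins $G_1, \dots, G_{\ell^{0.1}-1}$ or $G_{\ell^{0.1}}$ returned by \textsc{Partition} at depth $i-1$, and it is passed to \textsc{ColorReduce}$(G_i, \ell_i)$ with $\ell_i = \ell_{i-1}' = \ell_{i-1}^{0.9} - \ell_{i-1}^{0.6}$. Lemma~\ref{lemma:properties-of-Partition}(ii) says precisely that for all good nodes $v$ (and \textsc{ColorReduce} only recurses on good nodes), $d'(v) \le \ell_i + \ell_i^{0.7}$, where $d'(v)$ is the degree within the new bin. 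Since the recursion only ever operates on good nodes, this invariant is exactly Corollary~\ref{corollary:properties-of-Partition} instantiated at depth $i$, so no new work is needed — the degree bound is handed to us. Thus I do not anticipate a genuine obstacle here; the main "work" is simply bookkeeping the indices and the $0.9^i$ exponents correctly, exactly as in Lemma~\ref{lem:elli}. The proof is essentially a two-line induction: base case trivial, inductive step is "apply Corollary~\ref{corollary:properties-of-Partition} at depth $i$, then Lemma~\ref{lem:elli}."
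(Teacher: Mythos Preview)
Your proof is correct, but it takes a slightly different route from the paper's. The paper works directly with the recursion coming from Definition~\ref{def:good-and-bad} for good nodes, namely $d'(v) \le d(v)\ell^{-0.1} + \ell^{0.6}$, which gives $\Delta_{j+1} \le \Delta_j \ell_j^{-0.1} + \ell_j^{0.6}$; it then invokes the inductive hypothesis $\Delta_j \le 2^j \Delta^{0.9^j}$ together with the lower bound of Lemma~\ref{lem:elli} and checks that $(2^j \Delta^{0.9^j})(2^{0.1}\Delta^{-0.1\cdot 0.9^j}) + \Delta^{0.6\cdot 0.9^j} < 2^{j+1}\Delta^{0.9^{j+1}}$. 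You instead bypass the explicit recursion on $\Delta_j$ by appealing to the already-established invariant $d(v) \le \ell + \ell^{0.7}$ (Lemma~\ref{lemma:properties-of-Partition}(ii) / Corollary~\ref{corollary:properties-of-Partition}) at depth $j+1$ and combining it with only the \emph{upper} bound of Lemma~\ref{lem:elli}. This is cleaner: your inductive hypothesis on $\Delta_j$ is never actually used, so the argument is really direct rather than inductive, and it yields the sharper bound $\Delta_i \le 2\Delta^{0.9^i}$ uniformly in $i$. The paper's route is more self-contained in that it does not rely on Corollary~\ref{corollary:properties-of-Partition}, but both are valid and the downstream application in Lemma~\ref{lem:9-calls} is unaffected.
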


\begin{proof}
	The proof is by induction on $i$. As a base case, $\Delta_0 = \Delta = 2^0 \Delta^{0.9^0}$.
	
	Assume the claim holds for $i = j$. By the definition of good bins (cf. Definition~\ref{def:good-and-bad}), we have $\Delta_{j+1} \le \Delta_j \ell_j^{-0.1} + \ell_j^{0.6}$, and if we combine this bound with Lemma \ref{lem:elli}, by induction we obtain,
	\begin{align*}
	\Delta_{j+1}
	&\le
	\Delta_j \ell_j^{-0.1} + \ell_j^{0.6}
	\le
	\left(2^j \Delta^{0.9^j}\right) \left(2^{0.1} \Delta^{-0.1 \cdot 0.9^j}\right) + \Delta^{0.6 \cdot 0.9^j}
	<
	2^{j+1} \Delta^{0.9^{j+1}}
	\enspace.\qedhere
	\end{align*}
\end{proof}

Let us first give a simple bound for the recursion depth of our algorithm.

\begin{lemma}\label{lem:9-calls}
	After a depth of 9 recursive calls, the graph induced by each bin is of size $O(\nin)$.
\end{lemma}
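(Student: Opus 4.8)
The plan is to combine the bounds from Lemmas \ref{lem:ni}, \ref{lem:deltai}, and \ref{lem:elli} to show that by depth $9$, the graph handed to each recursive call has total size (number of nodes times maximum degree) that is $O(\nin)$. Recall that in a call \textsc{ColorReduce}$(G,\ell)$ on a graph of size $O(\nin)$, we simply collect and color locally; so it suffices to show the instances at depth $9$ satisfy $n_9 \cdot \Delta_9 = O(\nin)$.

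First I would observe that if $\Delta$ is smaller than some sufficiently large constant, then the original instance already has size $O(\nin)$ and the claim is trivial; so assume $\Delta$ is at least a large constant. The key quantitative fact is that $0.9^9 < 0.4$ (indeed $0.9^9 \approx 0.387$), so $\Delta^{0.9^9} < \Delta^{0.4}$. Then I would plug $i = 9$ into Lemma \ref{lem:ni} and Lemma \ref{lem:deltai}: up to the constant factors $3^9$ and $2^9$ (which are absorbed into the $O(\cdot)$), we get
\begin{align*}
n_9 &= O\!\left(\nin \Delta^{0.9^9 - 1} + \nin^{0.6}\right), &
\Delta_9 &= O\!\left(\Delta^{0.9^9}\right).
\end{align*}
Multiplying these, the dominant term is $\nin \Delta^{0.9^9 - 1} \cdot \Delta^{0.9^9} = \nin \Delta^{2 \cdot 0.9^9 - 1}$, and since $2 \cdot 0.9^9 < 0.8 < 1$, the exponent $2\cdot 0.9^9 - 1$ is negative, so this term is $O(\nin)$. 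The cross term $\nin^{0.6} \cdot \Delta^{0.9^9} = O(\nin^{0.6}\Delta^{0.4}) = O(\nin^{0.6} \cdot \nin^{0.4}) = O(\nin)$, using that $\Delta \le \nin$ (actually $\Delta < \nin$ since $\Delta$ is a max degree on $\nin$ nodes). Hence $n_9 \cdot \Delta_9 = O(\nin)$, and so each depth-$9$ instance has size $O(\nin)$.

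There is a subtlety in that one should also argue the recursion does not actually reach a problematic depth greater than $9$: once an instance has size $O(\nin)$ it is colored directly rather than partitioned further (per the first line of \textsc{ColorReduce}), so the recursion tree has depth at most $9$; the present lemma is exactly the statement justifying this. I would also note that the bound $\Delta \le \nin$ needed for the cross term is automatic since $\Delta$ is the maximum degree of a graph on $\nin$ nodes, and degrees only shrink along the recursion. The main (and essentially only) obstacle is bookkeeping: making sure all the exponents of $\Delta$ and $\nin$ are tracked correctly through the multiplication and that each of the finitely many polynomial constants ($3^9$, $2^9$) is correctly treated as $O(1)$; there is no conceptual difficulty, since the three preparatory lemmas do all the heavy lifting. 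A clean way to present it is to verify the two term-by-term bounds $\nin \Delta^{2\cdot 0.9^9 - 1} = O(\nin)$ and $\nin^{0.6}\Delta^{0.9^9} = O(\nin)$ separately, then conclude.
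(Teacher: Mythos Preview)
Your proposal is correct and follows essentially the same route as the paper: multiply the bounds $n_9 = O(\nin\Delta^{0.9^9-1}+\nin^{0.6})$ and $\Delta_9 = O(\Delta^{0.9^9})$ from Lemmas~\ref{lem:ni} and~\ref{lem:deltai}, use $0.9^9 < 0.4$, and check the two resulting terms are each $O(\nin)$ (using $\Delta \le \nin$ for the cross term). The paper's proof is identical up to presentation, writing out the explicit constant $6^9$ rather than absorbing it into $O(\cdot)$.
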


\begin{proof}
	By Lemmas \ref{lem:ni} and \ref{lem:deltai}, the total size of any graph $G'$ induced by any bin after a depth of $i$ recursive calls satisfies the following,
	\begin{align*}
	|G'|
	&\le
	n_i \cdot \Delta_i
	<
	3^i (\nin \Delta^{0.9^{i} - 1} + \nin^{0.6}) \cdot 2^i \Delta^{0.9^i}
	=
	6^i (\nin \Delta^{0.9^{i} - 1} + \nin^{0.6}) \cdot \Delta^{0.9^i}
	\enspace.
	\end{align*}
	
	Plugging in $i=9$, we obtained the following bound for the total size of any graph $G'$ induced by any bin after a depth of 9 recursive calls,
	\begin{align*}
	|G'|
	&<
	6^9 (\nin \Delta^{0.9^9 - 1} + \nin^{0.6}) \cdot \Delta^{0.9^9}
	<
	6^9(\nin \Delta^{0.4 - 1} + \nin^{0.6}) \cdot \Delta^{0.4}
	\le
	6^9 (\nin \Delta^{-0.2} +\nin) = O(\nin)
	\enspace.\qedhere
	\end{align*}
\end{proof}

\subsection{Finalizing the analysis for \congc: Proof of Theorem \ref{thm:CC-coloring}}
\label{sec:final-analysis}

Now we are ready to complete our analysis and prove our main theorem, Theorem \ref{thm:CC-coloring}.

\begin{proof}[Proof of Theorem \ref{thm:CC-coloring}]
	\textsc{ColorReduce}$(\Gin,\Delta)$ creates a recursion tree of depth at most $9$, resulting in $O(1)$ groups of instances of size $O(\nin)$ to be collected onto machines and colored sequentially. All operations within each recursive call (including coloring graphs $G_0$, see Corollary \ref{cor:G0-is-small}) can be implemented in a constant number of rounds in \congc and on \MPC with $O(\nin)$ local space on any machine (and $O(\nin\Delta)$ total space), so the total number of rounds required is also constant.
\end{proof}

\subsection{Note on optimal global space for ($\Delta+1$)-coloring (Theorem \ref{thm:MPC-coloring-congc})}
\label{subsec:reducing-space-for-Delta+1-coloring-on-MPC}

For the special case of ($\Delta+1$)-coloring, where initial palettes are all $[\Delta+1]$ and therefore need not be given as input, we remark that explicitly maintaining palettes for all nodes costs $\Theta(\nin\Delta)$ global space, which can exceed the optimal global space bound of $O(\nim+\nin)$. We address this issue as follows:

We require palettes in two places: when applying \textsc{Partition}, and when coloring instances of size $O(\nin)$ locally. In this latter case, we can afford to arbitrarily drop colors from a node $v$'s palette until its palette size is $d(v)+1$, so the space used in total is $O(\nim)$.

In the former case, consider applying \textsc{Partition} to an instance with $n$ nodes and $m$ edges. Nodes will be distributed over $O(1+\frac{m}{\nin})$ machines in order to apply the method of conditional expectations, for which they require palette access.

Notice that during the run of our algorithm, node palettes are updated in two ways: by restriction based on partition by some hash function, and by removing colors when they are used to color neighbors. We will maintain the palette of each node $v$ by explicitly storing colors used by its neighbors, and implicitly storing colors removed during partitioning by storing the chosen hash function. In this way, nodes' palettes are fully specified and can be queried. For any node $v$, we explicitly store up to one color per neighbor, which requires $d(v)$ space for $v$ and therefore total $O(\nim)$ for all nodes.

So, it remains to analyze the space cost of storing the necessary hash functions. In our \textsc{Partition} call on $n$ nodes and $m$ edges, we have $O(1+\frac{m}{\nin})$ machines which must each store the $O(1)$ hash functions which have previously been applied to this instance. These hash functions are specified by $O(\log \nin)$ bits, so the total space required is $O(\log \nin+\frac{m\log \nin}{\nin})$.

We may run up to $O(\sqrt\nin)$ calls to \textsc{Partition} concurrently (we cannot have more, because we must have $m=\Omega(\nin)$, so $n=\Omega(\sqrt\nin)$). Furthermore, each edge in $\Gin$ is present in at most one call. So, the total space used to store hash functions over \emph{all} concurrent \textsc{Partition} calls is $O(\sqrt\nin\log \nin+\frac{\nim\log \nin}{\nin}) = O(\nin+\nim)$, which yields Theorem \ref{thm:MPC-coloring-congc}.



\section{Coloring in Low-Space \MPC}
We now extend our methods to the more restrictive regime of low-space \MPC. Again, our algorithm will be a recursive procedure which relies on a derandomized partitioning of nodes and colors into smaller instances. However, reducing to instances of size $O(\nin)$ is no longer sufficient to allow collection onto single machines for a constant-round solution. So, we instead reduce until degree is $\nin^{7\delta}$ for some sufficiently small constant $\delta$, and then we can apply a reduction to maximal independent set (MIS) in order to use an existing algorithm of \cite{CDP19}, requiring $O(\log \Delta+\log\log \nin)$ rounds. This is the reason for the difference in round complexity compared to Algorithms \ref{alg:ColorReduce}, \ref{alg:Partition}.

\subsection{Reduction to MIS in low degree instances}

When degree is $O(n^{7\delta})$, we apply the standard (due to Luby \cite{Luby86}) reduction to maximal independent set (MIS): a new graph is created in which nodes $v$ in the original graph correspond to cliques of size $p(v)$ in the new graph, with each clique node representing a color from the palette of node $v$. If two neighboring original nodes share a color between their palettes, an edge is drawn between their corresponding clique nodes. An MIS in this new graph necessarily indicates a complete proper coloring of the original graph, since it can easily be seen that exactly one clique node from each clique joins the MIS, and this determines the color for the original node. When the reduction is applied to a graph with $\hat\nin$ vertices and maximum degree $\nin^{7\delta}$, the new reduction graph has at most $O(\hat\nin \cdot \nin^{7\delta})$ vertices, and maximum degree at most $\nin^{14\delta}$.

Then, we can solve $(\deg+1)$-list coloring by applying the MIS algorithm of \cite{CDP19} to this reduction graph. This algorithm requires $O(\log \Delta+\log\log \nin)$ rounds, and can be applied to our reduction graphs using $O(\nin^{\delta})$ and $O(\hat\nin^{1+\delta}\nin^{21\delta})$ local and global space respectively.

\subsection{Main low-space \MPC algorithm}

We now present our main algorithm for coloring in low-space \MPC, which recursively reduces instances until they have degree $O(\nin^{7\delta})$ and then applies the MIS reduction.

\begin{algorithm}[H]
	\caption{\textsc{LowSpaceColorReduce}$(G)$}
	\label{alg:LSColorReduce}
	\begin{algorithmic}
		\State $G_0, \dots, G_{\nin^{\delta}} \gets \textsc{LowSpacePartition}(G)$.
		\State For each $i = 1, \dots,\nin^{\delta}-1$, perform \textsc{LowSpaceColorReduce}$(G_i)$ in parallel.
		\State Update color palettes of $G_{\nin^{\delta}}$, perform \textsc{LowSpaceColorReduce}$(G_{\nin^{\delta}})$.
		\State Update color palettes of $G_0$, color $G_0$ using MIS reduction.
	\end{algorithmic}
\end{algorithm}

Again, we employ a partitioning procedure to divide the input instance into \emph{bins}, which are then solved recursively:

\begin{algorithm}[H]
	\caption{\textsc{LowSpacePartition}$(G)$}
	\label{alg:LSPartition}
	\begin{algorithmic}
		\State Let $G_0$ be the graph induced by the set $V_0$ of nodes $v$ with $d(v)\le \nin^{7\delta}$.
		\State Let hash function $h_1:[\nin]\rightarrow [\nin^{\delta}]$ map each node $v\notin V_0$ to a bin $h_1(v) \in [\nin^{\delta}]$.
		\State Let hash function $h_2:[\nin^2]\rightarrow [\nin^{\delta}-1]$ map colors $\gamma$ to a bin $h_2(\gamma) \in [\nin^{\delta}-1]$.
		\State Let $G_1,\dots,G_{\nin^{\delta}}$ be the graphs induced by bins $1,\dots,\nin^{\delta}$ respectively.
		\State Restrict palettes of nodes in $G_1,\dots,G_{\nin^{\delta}-1}$ to colors assigned by $h_2$ to corresponding bins.
		\State Return $G_0,\dots,G_{\nin^{\delta}}$.
	\end{algorithmic}
\end{algorithm}

The structure of this algorithm is very similar to Algorithms \ref{alg:ColorReduce} and \ref{alg:Partition}. The main structural difference is that $G_0$ is no longer a graph of \emph{bad} nodes (we will show that we can choose hash functions so that, in essence, no nodes are bad), but instead a graph of the nodes with low enough degree to apply the MIS reduction. By dealing with low-degree nodes throughout the algorithm in this way, we can account for differences in palette sizes, and thereby solve $(\deg+1)$-list coloring instead of just $(\Delta+1)$-list coloring.

The only invariant we maintain during the course of \textsc{LowSpaceColorReduce} is that all nodes have sufficient colors to color themselves, i.e. $p(v)>d(v)$. This is clearly true for our initial call on input graph $\Gin$, and later (in Lemma \ref{lem:LSPaletteSize}) it is preserved by \textsc{LowSpaceColorPartition}.

We would like to (analogously to Definition \ref{def:good-and-bad}) define \emph{good} nodes whose behavior does not differ too much from what we would expect from a random partition (and \emph{bad} nodes for whom the opposite is true). However, one of the restrictions of the low-space regime is that we cannot store a node's palette, or all of its adjacent edges, on a single machine, and therefore machines could not determine locally whether a node is good or bad. So, we will instead divide nodes' neighbors and palettes into smaller sets which \emph{do} fit onto single machines, and define a notion of a \emph{machine} being good.

Specifically, for each node $v\notin G_0$, we will create a set $M_N^v$ of machines which will be responsible for $v$'s neighbors. We partition $v$'s neighbor set among machines in $M_N^v$, with each machine receiving between $\nin^{7\delta}$ and $2\nin^{7\delta}$ neighbors.
Similarly, for each node $v\notin G_0 \cup G_{n^{\delta}}$, we create a set $M_C^v$ of machines responsible for the colors in its palette, and partition the palette among the machines, with each machine receiving between $\nin^{7\delta}$ and $2\nin^{7\delta}$ colors.
In both cases, this is possible since for $v\notin G_0$, $p(v)>d(v)>\nin^{7\delta}$.

For consistency of notation, for a machine $x\in M_N^v$, we define $d(x)$ to be the number of neighbors it receives, and $d'(x)$ to be the number of such neighbors which are hashed to the same bin as $v$.
Similarly, for a machine $x\in M_C^v$ holding colors for a node $v$, we define $p(x)$ to be the number of colors received, and $p'(x)$ to be the number of such colors hashed by $h_2$ to the same bin as $v$.

\begin{definition}[\textbf{Good and bad machines}]\mbox{}\label{def:LSgood-and-bad}
	\begin{itemize}
		\item A machine $x\in M_N^v$ is \emph{good} if $|d'(x) - d(x) \nin^{-\delta}| \le d(x)^{0.6}$.
		\item A machine $x\in M_C^v$ is \emph{good} if $p'(x) > p(x)  \nin^{-\delta} + p(x)^{0.7}$.
		\item Machines are \emph{bad} if they are not \emph{good}.
	\end{itemize}
\end{definition}

Our next step is to show that, under a \emph{random} choice of hash functions during partitioning, with high probability there are no bad machines.

\subsection{Good and bad machines for \emph{random} hash functions}
\label{subsec:LS-bad-and-good-for-random-hash-function}

We apply the bounded-independence concentration bound to show that node degrees and palette sizes within bins to not differ too much from their expectation:

\begin{lemma}\label{lem:LScoloreduce1}
	For each machine $x \in M_N^v$, $|d'(x) - d(x) \nin^{-\delta}| \le  d(x)^{0.6}$ with probability at least $1-\nin^{-2}$.
\end{lemma}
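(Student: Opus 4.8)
The plan is to apply the bounded-independence concentration bound of Lemma \ref{lem:conc} exactly in the style of Lemmas \ref{lem:badbin}, \ref{lem:badnode1}, and \ref{lem:badnode2}, but now restricted to the single machine $x \in M_N^v$ and its $d(x)$ assigned neighbors of $v$. First I would introduce indicator random variables $Z_1, \dots, Z_{d(x)}$, where $Z_j = 1$ if the $j$-th neighbor assigned to machine $x$ is hashed by $h_1$ to the same bin $h_1(v)$ as $v$. Since $h_1$ is drawn from a $c$-wise independent family, these variables are $(c-1)$-wise independent (conditioning on the value $h_1(v)$ costs one level of independence), and each has expectation $\nin^{-\delta}$, since $h_1$ maps into $[\nin^{\delta}]$. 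Setting $Z = \sum_j Z_j = d'(x)$, we have $\mu = \Exp{Z} = d(x)\nin^{-\delta}$.

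Next I would invoke Lemma \ref{lem:conc} with $t = d(x)$ and deviation $\lambda = d(x)^{0.6}$, giving
\[
\Prob{|d'(x) - d(x)\nin^{-\delta}| \ge d(x)^{0.6}}
\le
2\left(\frac{c\, d(x)}{d(x)^{1.2}}\right)^{c/2}
=
2\left(c\, d(x)^{-0.2}\right)^{c/2}
\enspace.
\]
Here the key point is that each machine $x \in M_N^v$ receives at least $\nin^{7\delta}$ neighbors, so $d(x) \ge \nin^{7\delta}$ and hence $d(x)^{-0.2} \le \nin^{-1.4\delta}$. Therefore the right-hand side is at most $2(c\,\nin^{-1.4\delta})^{c/2} \le \nin^{-2}$, provided $c$ is chosen a sufficiently large constant relative to $1/\delta$ (concretely, $c \ge 3/\delta$ suffices, with a bit of room for the constant $c$ and the factor of $2$ absorbed into the exponent). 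This yields the claimed bound.

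The only mild subtlety — and the step I'd expect to be the main (though still routine) obstacle — is making sure the parameters line up: we need $c$ large enough as a function of $\delta$ so that the exponent $c/2$ times $1.4\delta$ comfortably exceeds $2$ after accounting for the $c$ inside the base, which is fine since $(c\,\nin^{-1.4\delta})^{c/2} = c^{c/2}\nin^{-0.7\delta c}$ and $c^{c/2}$ is just a constant while $\nin^{-0.7\delta c}$ can be driven below $\nin^{-2}$ by taking $c$ large. One should also note (as the paper does for $h_2$ elsewhere) that the simulated hash function with range $\nin^{\delta}$ is only approximately uniform, introducing $O(\nin^{-3})$ error in the per-variable probabilities, which is negligible and does not affect the $c$-wise independence needed for Lemma \ref{lem:conc}. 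Everything else is a direct transcription of the earlier degree-concentration arguments, now localized to a single machine's share of the neighborhood rather than the whole neighborhood.
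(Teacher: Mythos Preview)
Your proposal is correct and follows exactly the same approach as the paper's proof: introduce indicator variables $Z_1,\dots,Z_{d(x)}$ for each stored neighbor landing in $v$'s bin, note they are $(c-1)$-wise independent with expectation $\nin^{-\delta}$, apply Lemma~\ref{lem:conc} with $\lambda = d(x)^{0.6}$, and use $d(x)\ge \nin^{7\delta}$ together with $c$ being a sufficiently large constant to drive the failure probability below $\nin^{-2}$. Your additional remarks on the size of $c$ relative to $1/\delta$ and the negligible non-uniformity error are accurate elaborations that the paper simply leaves implicit.
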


\begin{proof}
	We apply Lemma \ref{lem:conc} with $Z_1, \dots, Z_{d(x)}$ as indicator random variables for the events that each neighbor held is hashed to $v$'s bin. These variables are $(c-1)$-wise independent and each have expectation $\nin^{-\delta}$. Since we assume $c$ to be a sufficiently high constant and we have $d(x)\ge \nin^{7\delta}$, by Lemma \ref{lem:conc}, we obtain $\Prob{|Z-\Exp{Z}| \ge d(x)^{0.6}} \le d(x)^{-0.2c}\le \nin^{-2}$.	Since $\Exp{Z} = d(x)\nin^{-\delta}$, we have the required claim.
\end{proof}

\begin{lemma}\label{lem:LScoloreduce2}
	For each machine $x \in M_C^v$, $p'(v) > p(v)  \nin^{-\delta} + p(v)^{0.7}$ with probability at least $1-\nin^{-2}$.
\end{lemma}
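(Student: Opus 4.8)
The plan is to follow the template of Lemma~\ref{lem:badnode2}, adapted to the machine-based accounting of Definition~\ref{def:LSgood-and-bad} and to the low-space parameters. (As in the linear-space analogue, the inequality the proof actually establishes is the one phrased in terms of the machine's colors, i.e.\ $p'(x) > p(x)\nin^{-\delta} + p(x)^{0.7}$.) For a fixed machine $x \in M_C^v$ I would introduce indicator random variables $Z_1, \dots, Z_{p(x)}$, one per color held by $x$, with $Z_j = 1$ exactly when the $j$-th color is hashed by $h_2$ into the bin $h_1(v)$ of $v$ (this event is meaningful since $v \notin G_0\cup G_{\nin^\delta}$ forces $h_1(v)\in[\nin^\delta-1]$, the range of $h_2$). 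Since $h_2$ is drawn from a $c$-wise independent family, conditioning on the value of $h_1(v)$ leaves the $Z_j$ $(c-1)$-wise independent, each with expectation $\frac{1}{\nin^\delta-1}$; set $Z = \sum_j Z_j = p'(x)$ and $\mu = \Exp{Z} = \frac{p(x)}{\nin^\delta-1}$.

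Next I would apply Lemma~\ref{lem:conc} with $t = p(x)$ and $\lambda = p(x)^{0.6}$. Because $v\notin G_0$ forces $p(x) > d(x) > \nin^{7\delta}$, the resulting bound $2\big(c\,p(x)^{-0.2}\big)^{c/2}$ is $p(x)^{-\Omega(c)}$, and for $c$ a sufficiently large constant this is at most $\nin^{-2}$. Hence with probability at least $1-\nin^{-2}$ we have $p'(x) \ge \mu - p(x)^{0.6} = \frac{p(x)}{\nin^\delta-1} - p(x)^{0.6}$. It then remains to verify that this quantity exceeds the threshold $\frac{p(x)}{\nin^\delta} + p(x)^{0.7}$ of Definition~\ref{def:LSgood-and-bad}. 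For this I would bound the gap between the two relevant expectations,
\[
\frac{p(x)}{\nin^\delta-1} - \frac{p(x)}{\nin^\delta} = \frac{p(x)}{\nin^\delta(\nin^\delta-1)} \ge \frac{p(x)}{\nin^{2\delta}},
\]
and observe that $p(x) > \nin^{7\delta}$ gives $\nin^{2\delta}\le p(x)^{2/7}$, so this gap is at least $p(x)^{5/7}$, which dominates $p(x)^{0.6}+p(x)^{0.7}$ once $p(x)$ (equivalently $\nin$, or $\Delta$) exceeds a fixed constant. Chaining these estimates yields $p'(x) > \frac{p(x)}{\nin^\delta} + p(x)^{0.7}$ on the good event, as claimed.

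I do not anticipate a genuine obstacle. The two points needing care are: (a) checking that conditioning on $h_1(v)$ costs exactly one level of independence, so that $(c-1)$-wise independence still suffices to invoke Lemma~\ref{lem:conc}; and (b) the polynomial bookkeeping — the threshold $\nin^{7\delta}$ on $p(x)$ is calibrated precisely so that the $\Theta\!\big(p(x)/\nin^{2\delta}\big)$ slack between $\frac{p(x)}{\nin^\delta-1}$ and $\frac{p(x)}{\nin^\delta}$ swamps both the concentration error $p(x)^{0.6}$ and the additive slack $p(x)^{0.7}$ (here the inequality $5/7 > 0.7$ is exactly what makes it go through). A union bound over the $\mathrm{poly}(\nin)$ machines of this type then gives that, with high probability, no machine in any $M_C^v$ is bad, which is the form in which the lemma is used downstream.
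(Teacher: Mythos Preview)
Your proposal is correct and follows essentially the same approach as the paper's own proof: the same indicator variables, the same application of Lemma~\ref{lem:conc} with deviation $p(x)^{0.6}$, and the same gap estimate $\frac{p(x)}{\nin^\delta-1}-\frac{p(x)}{\nin^\delta}\ge \frac{p(x)}{\nin^{2\delta}}$ (the paper bounds this by $2p(x)^{0.7}$ rather than your $p(x)^{5/7}$, but both suffice). One small correction: the reason $p(x)\ge \nin^{7\delta}$ is not ``$p(x)>d(x)>\nin^{7\delta}$'' (those quantities live on different machine types), but simply that by construction each machine in $M_C^v$ is handed between $\nin^{7\delta}$ and $2\nin^{7\delta}$ colors.
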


\begin{proof}
	We apply Lemma \ref{lem:conc}, with $Z_1,\dots,Z_{p(x)}$ as indicator random variables for the events that each color held is hashed to $v$'s bin. These variables are $(c-1)$-wise independent and each have expectation $\frac{1}{\nin^{\delta}-1}$. Since we assume $c$ to be a sufficiently high constant and have $p(x) \ge \nin^{7\delta}$, by Lemma \ref{lem:conc}, we obtain $\Prob{|Z-\Exp{Z}| \ge p(x)^{0.6}} \le p(x)^{-0.2c}\le \nin^{-2}$.
	
	Therefore, since $Z = p'(x)$ and $\Exp{Z} = \frac{p(x)}{\nin^{\delta}-1}$, the probability that $p'(x) \le \frac{p(x)}{\nin^{\delta}-1} - p(x)^{0.6}$ is at most $p\nin^{-2}$.
	
	Furthermore, since $p(v) \ge \nin^{7\delta}$,
	\begin{align*}
	\frac{p(v)}{\nin^{\delta}-1} - \frac{p(v)}{\nin^{\delta}}
	&=
	\frac{p(v)}{\nin^{\delta}(\nin^{\delta}-1)}
	\ge
	\frac{p(v)}{\nin^{2\delta}}
	>
	\frac{2p(v)}{p(v)^{0.3}}
	=
	2p(v)^{0.7}
	\enspace.
	\end{align*}
	Hence, with probability at least $1 - \nin^{-2}$ we obtain the following,
	\begin{align*}
	p'(v)
	&>
	\frac{p(v)}{\nin^{\delta}-1} - p(v)^{0.6}
	>
	\frac{p(v)}{\nin^{\delta}} + p(v)^{0.7}
	\enspace.\qedhere
	\end{align*}
\end{proof}

We will now define a cost function for pairs of hash functions that we would like to minimize, analogously to our definition in Section \ref{subsec:bad-and-good-for-random-hash-function}. This time our cost function is simpler, since we can weight all of our bad events equally, and with high probability none of them occur.

We define the \emph{cost} function $\qual(h_1,h_2)$ of a pair of hash functions $h_1 \in \mathcal{H}_1$ and $h_2 \in \mathcal{H}_2$, as follows:
\begin{align}
\label{def:LSquality}
\qual(h_1,h_2)
&=
|\{\text{bad machines under }h_1,h_2\}|
\enspace.
\end{align}

We can then bound the cost of random hash function pairs:

\begin{lemma}\label{lemma:LSexpected-quality}
	The expected cost of a random hash function pair from $\mathcal{H}_1 \times \mathcal{H}_2$ is less than $1$.
\end{lemma}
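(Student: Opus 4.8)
The plan is to bound the expected cost by summing the failure probabilities over all machines and invoking linearity of expectation together with a union-bound-style counting argument. Recall that by the definition in \eqref{def:LSquality}, $\qual(h_1,h_2)$ simply counts the number of bad machines, so $\Exp{\qual(h_1,h_2)}$ equals the sum over all machines $x$ of $\Prob{x \text{ is bad}}$. The machines split into two types: those in $M_N^v$ (responsible for neighbors of some node $v$) and those in $M_C^v$ (responsible for palette colors of some node $v \notin G_0 \cup G_{\nin^\delta}$). Lemma \ref{lem:LScoloreduce1} bounds $\Prob{x \text{ bad}} \le \nin^{-2}$ for every $x \in M_N^v$, and Lemma \ref{lem:LScoloreduce2} does the same for every $x \in M_C^v$.

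Next I would count the total number of machines. Each machine in $M_N^v$ is responsible for at least $\nin^{7\delta}$ neighbors of $v$, and since $\sum_v d(v) = 2\nim = O(\nin^{1+\delta})$ (using $d(v) \le \Delta$ and $\Delta$ bounded appropriately; more carefully, the number of edges in any instance is at most that of the original graph), the total number of neighbor-machines over all $v$ is at most $O(\nin^{1+\delta} / \nin^{7\delta}) = O(\nin^{1-6\delta}) \le \nin$. Similarly, each machine in $M_C^v$ holds at least $\nin^{7\delta}$ palette colors, and the total palette size $\sum_v p(v)$ is at most $\sum_v (d(v)+1) + (\text{colors added by restriction})$; in any case it is polynomially bounded, say $O(\nin^2)$ as a crude bound, but since the instance has $\nim = O(\nin^{1+\delta})$ edges and palettes can be taken to have size $O(\nin^{7\delta})$ for relevant nodes in bin-restricted instances — more carefully one observes $p(v) = O(d(v)^2)$ is never needed and $p(v) \le \nin^2$ suffices — the number of color-machines is at most $O(\nin^2 / \nin^{7\delta})$. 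Combining, the total machine count is at most $\nin^{O(1)}$; choosing the exponent bookkeeping so that it is at most $\nin^{c_0}$ for a fixed constant $c_0 < 2$, we get $\Exp{\qual(h_1,h_2)} \le \nin^{c_0} \cdot \nin^{-2} < 1$.

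The main obstacle I anticipate is getting the counting of color-machines to be genuinely sub-$\nin^2$, since the total number of distinct colors across all palettes can be as large as $\Theta(\nin^2)$ in a list-coloring instance, and if we naively allowed $\Theta(\nin^2)$ color-machines each failing with probability $\nin^{-2}$, the expected cost would only be $O(1)$ rather than strictly below $1$. The resolution is to be more careful: the relevant quantity is $\sum_{v \notin G_0 \cup G_{\nin^\delta}} p(v)$, and since every such $v$ has $p(v) > d(v) > \nin^{7\delta}$ but also $p(v)$ is bounded by the number of edges plus nodes in the instance after restriction (palettes shrink by a $\nin^{-\delta}$ factor at each level), one can argue $\sum_v p(v) = O(\nin^{1+\delta} + \nin)$ in every recursive instance, so the number of color-machines is $O(\nin^{1-6\delta})$, the same as for neighbor-machines. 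Then the total is $O(\nin^{1-6\delta}) \le \nin$ machines, each bad with probability at most $\nin^{-2}$, giving $\Exp{\qual(h_1,h_2)} \le \nin \cdot \nin^{-2} = \nin^{-1} < 1$, as claimed. A secondary subtlety is ensuring Lemmas \ref{lem:LScoloreduce1} and \ref{lem:LScoloreduce2} apply to every machine simultaneously — but this is exactly why those lemmas are phrased per-machine with a $\nin^{-2}$ bound that already absorbs a union bound over the polynomially many machines.
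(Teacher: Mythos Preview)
Your overall strategy---linearity of expectation plus the per-machine bounds from Lemmas~\ref{lem:LScoloreduce1} and~\ref{lem:LScoloreduce2}---is exactly what the paper does. The problem is in your machine count, where you make unjustified assertions. You write $\sum_v d(v) = 2\nim = O(\nin^{1+\delta})$, and later $\sum_v p(v) = O(\nin^{1+\delta}+\nin)$; neither bound is available. Nothing in the setup restricts $\nim$ beyond $\nim \le \binom{\nin}{2}$, so these claims are simply false in general, and the ``resolution'' paragraph built on them does not go through.

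The paper sidesteps all of this with a one-line count: the number of machines is $O\!\left(\nin + \nim/\nin^{7\delta}\right)$, which is $o(\nin^2)$ because $\nim = O(\nin^2)$ and we divide by $\nin^{7\delta}$. Multiplying by the per-machine failure probability $\nin^{-2}$ gives $o(1) < 1$. The point you were worried about---that an exact $\Theta(\nin^2)$ machine count would only yield $O(1)$---is handled precisely by the $\nin^{7\delta}$ denominator, not by any assumption on $\nim$. For the palette side, recall this is $(\deg+1)$-list coloring, so the total palette size is $\sum_v p(v) = \sum_v (d(v)+1) = O(\nim+\nin)$ initially and does not grow through the recursion; hence color-machines are counted by the same $O(\nim/\nin^{7\delta})$ bound, and there is no need for the delicate sub-$\nin^2$ bookkeeping you attempt.
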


\begin{proof}
	By Lemmas \ref{lem:LScoloreduce1} and \ref{lem:LScoloreduce2}, any machine is bad with probability at most $\nin^{-2}$. The number of machines we require is $O(\nin+ \frac{\nim}{\nin^{7\delta}})=o(\nin^2)$. Therefore,
	\begin{align*}
	\Exp{\qual(h_1,h_2)}
	&=
	\Exp{|\{\text{bad machines}\}|}
	\le
	\nin^{-2} \cdot o(\nin^2)
	<
	1
	\enspace.\qedhere
	\end{align*}
\end{proof}

Applying the method of conditional expectations therefore gives the following:

\begin{lemma}\label{lem:LSdeterministic-hashing}
	In $O(1)$ \MPC rounds with $O(\nin^{7\delta})$ local space per  machine and $O(\nim+\nin)$ total space, one can select hash functions $h_1 \in \mathcal{H}_1$, $h_2 \in \mathcal{H}_2$ such that in a single call to \textsc{LowSpacePartition},
	\begin{itemize}
		\item for any node $v\notin G_0$, $d'(v) < 2d(v)\nin^{-\delta}$, and
		\item for any node $v\notin G_0\cup G_{\nin^{\delta}}$, $d'(v)<p'(v)$.
	\end{itemize}
\end{lemma}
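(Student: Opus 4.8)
The plan is to combine the high-probability guarantees of Lemmas \ref{lem:LScoloreduce1} and \ref{lem:LScoloreduce2} with the method of conditional expectations applied to the cost function $\qual$ of (\ref{def:LSquality}). By Lemma \ref{lemma:LSexpected-quality}, a random pair $(h_1,h_2) \in \mathcal{H}_1 \times \mathcal{H}_2$ satisfies $\Exp{\qual(h_1,h_2)} < 1$. The key point is that, as in Section \ref{sec:condexp}, $\qual$ decomposes as a sum of local cost functions: each machine $x \in M_N^v$ or $x \in M_C^v$ stores exactly the $\Theta(\nin^{7\delta})$ neighbors or colors it is responsible for, together with the identities needed to evaluate $h_1$ on $v$ and on those neighbors (resp.\ $h_2$ on those colors), and can thus compute locally the indicator $q_x(h_1,h_2)$ of whether $x$ is bad under $(h_1,h_2)$. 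Since hash functions from our families are specified by $O(\log \nin)$ bits and are evaluable with polylogarithmic computation (Lemma \ref{lem:hash}), this fits in $O(\nin^{7\delta})$ local space, and summing the $q_x$ over the $O(\nin + \nim/\nin^{7\delta})$ machines via Lemma \ref{lem:comm} (running many prefix-sum instances concurrently by picking the sorting exponent $\delta$ small enough) takes $O(1)$ rounds. The conditional-expectation search then fixes an $O(\log\nin)$-bit seed in $O(1)$ rounds so that $\qual(h_1,h_2) \le \Exp{\qual(h_1,h_2)} < 1$, and since $\qual$ is a nonnegative integer this forces $\qual(h_1,h_2) = 0$: \emph{no machine is bad}.

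Next I would translate ``no bad machine'' into the two claimed per-node bounds. Consider a node $v \notin G_0$. Its neighbor set is partitioned among the machines $x \in M_N^v$, each holding $d(x)$ neighbors with $\nin^{7\delta} \le d(x) \le 2\nin^{7\delta}$, and $d'(v) = \sum_{x \in M_N^v} d'(x)$, $d(v) = \sum_{x \in M_N^v} d(x)$. Since every such $x$ is good, $d'(x) \le d(x)\nin^{-\delta} + d(x)^{0.6} \le d(x)\nin^{-\delta} + (2\nin^{7\delta})^{0.6} = d(x)\nin^{-\delta} + O(\nin^{4.2\delta})$, while $d(x) \ge \nin^{7\delta}$ gives $d(x)^{0.6} \le d(x)^{0.6} \le \nin^{-\delta}d(x)$ provided $d(x) \ge \nin^{7\delta}$ is larger than a sufficiently large power of $\nin$ — more carefully, $d(x)^{0.6} = d(x) \cdot d(x)^{-0.4} \le d(x)\nin^{-2.8\delta} \le d(x)\nin^{-\delta}$. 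Hence $d'(x) \le 2d(x)\nin^{-\delta}$ for each machine, and summing over $x \in M_N^v$ yields $d'(v) \le 2d(v)\nin^{-\delta}$, the first claim.

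For the second claim, take $v \notin G_0 \cup G_{\nin^{\delta}}$. Now $v$ also has its palette partitioned among the machines $x \in M_C^v$ with $\nin^{7\delta} \le p(x) \le 2\nin^{7\delta}$, and $p'(v) = \sum_{x \in M_C^v} p'(x)$, $p(v) = \sum_{x \in M_C^v} p(x)$. Goodness of each $x \in M_C^v$ gives $p'(x) > p(x)\nin^{-\delta} + p(x)^{0.7}$, and since $p(x) \ge \nin^{7\delta}$ we have $p(x)^{0.7} = p(x)\cdot p(x)^{-0.3} \ge$ (a quantity comfortably exceeding) $d(x)\nin^{-\delta} \cdot$ (slack), so that $p'(x)$ exceeds $2p(x)\nin^{-\delta}$; summing over $M_C^v$ gives $p'(v) > 2p(v)\nin^{-\delta}$. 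Combined with the first claim and the maintained invariant $p(v) > d(v)$ (true initially and, by Lemma \ref{lem:LSPaletteSize}, preserved), we get $p'(v) > 2p(v)\nin^{-\delta} > 2d(v)\nin^{-\delta} \ge d'(v)$, which is the second claim. Finally I would record the space accounting: the $O(\nin + \nim/\nin^{7\delta})$ machines each holding $O(\nin^{7\delta})$ words of neighbor/color data plus $O(1)$ hash descriptors use $O(\nin^{7\delta})$ local space and $O(\nim + \nin)$ total space, exactly as in the statement, and the whole selection runs in $O(1)$ rounds.

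The main obstacle I anticipate is the bookkeeping that lets $\qual$ be a genuine sum of machine-locally computable terms under the low-space constraint: one must verify that a machine responsible for a block of $v$'s neighbors (and knowing $v$'s own ID and $h_1$) can indeed decide its own goodness without ever seeing the rest of $v$'s neighborhood or palette, and that the concurrent prefix-sum aggregation over up to $\Theta(\nin^{2-7\delta})$ machines — really $o(\nin^2)$ of them — fits the total-space budget while $\delta$ is still free to be chosen against the sorting exponent in Lemma \ref{lem:comm}. The passage from ``per-machine'' concentration to ``per-node'' bounds is then just summation of nonnegative quantities and is routine once the slack inequalities $d(x)^{0.6} \le d(x)\nin^{-\delta}$ and $p(x)^{0.7} \ge 2p(x)\nin^{-\delta}$ (valid because $d(x), p(x) \ge \nin^{7\delta}$ and $\delta$ is a small constant) are in hand.
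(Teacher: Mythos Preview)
Your setup and first claim are correct. The method-of-conditional-expectations step is exactly as in the paper, and your derivation of $d'(v) < 2d(v)\nin^{-\delta}$ via the per-machine bound $d(x)^{0.6} = d(x)\cdot d(x)^{-0.4} \le d(x)\cdot\nin^{-2.8\delta} < d(x)\nin^{-\delta}$ is a clean variant of the paper's argument (which instead bounds $|M_N^v|\cdot(2\nin^{7\delta})^{0.6}$).

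The second claim, however, contains a genuine error. You assert the slack inequality $p(x)^{0.7} \ge 2p(x)\nin^{-\delta}$ (equivalently $p(x)^{-0.3} \ge 2\nin^{-\delta}$), and from it conclude $p'(x) > 2p(x)\nin^{-\delta}$ and hence $p'(v) > 2p(v)\nin^{-\delta}$. But the inequality goes the \emph{wrong way}: since $p(x) \ge \nin^{7\delta}$, one has $p(x)^{-0.3} \le \nin^{-2.1\delta} \ll \nin^{-\delta}$, so in fact $p(x)^{0.7} \le p(x)\nin^{-2.1\delta}$ is \emph{small} compared to $p(x)\nin^{-\delta}$, not large. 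The bound $p'(v) > 2p(v)\nin^{-\delta}$ is simply false in general.

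What actually works (and what the paper does) is to keep matched second-order terms on both sides. From goodness one gets
\[
d'(v) \le d(v)\nin^{-\delta} + d(v)\nin^{-2.5\delta}
\quad\text{and}\quad
p'(v) > p(v)\nin^{-\delta} + p(v)\nin^{-2.5\delta},
\]
the palette lower bound coming from $\sum_{x\in M_C^v} p(x)^{0.7} \ge |M_C^v|\cdot(\nin^{7\delta})^{0.7} \ge \frac{p(v)}{2\nin^{7\delta}}\cdot\nin^{4.9\delta} > p(v)\nin^{-2.5\delta}$. Now both expressions have the same shape, and the invariant $p(v) > d(v)$ gives $p'(v) > d'(v)$ directly. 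The point is that the palette bonus $p(x)^{0.7}$ and the degree error $d(x)^{0.6}$ are of comparable (sub-leading) order, and the comparison must be made at that level, not against the leading term $\nin^{-\delta}$. (A minor side note: you should cite the invariant $p(v)>d(v)$ as the standing input assumption rather than via Lemma~\ref{lem:LSPaletteSize}, which itself relies on the lemma you are proving.)
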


\begin{proof}
	By the method of conditional expectations (using, as local functions $q_x$, the indicator variables for the event that machine $x$ is bad), we can select a hash function pair $h_1,h_2$ with cost $q(h_1,h_2)<1$; since cost must be a integer, this means that $q(h_1,h_2)=0$, i.e. all machines are good.
	
	By Definition \ref{def:LSgood-and-bad}, for any node $v\notin G_0$, we therefore have:
	\begin{align*}
	d'(v)
	&=
	\sum_{x \in M_N^v} d'(x)
	\le
	\sum_{x \in M_N^v} \left(d(x)\nin^{-\delta}+d(x)^{0.6} \right)
	\le
	d(v)\nin^{-\delta} + |M_N^v|(2\nin^{7\delta})^{0.6}
	<
	d(v)\nin^{-\delta} + \nin^{4.5\delta}\frac{d(v)}{\nin^{7\delta}}
	\\&=
	d(v)\nin^{-\delta}+d(v)\nin^{-2.5\delta}
	\enspace.
	\end{align*}
	
	Since $d(v)\nin^{-\delta}+d(v)\nin^{-2.5\delta}<2d(v)\nin^{-\delta}$, this satisfies the first part of the lemma.
	
	For any node $v\notin G_0\cup G_{\nin^{\delta}}$, we also have:
	\begin{align*}
	p'(v)
	&=
	\sum_{x \in M_C^v} p'(x)
	>
	\sum_{x \in M_C^v} \left(p(x)\nin^{-\delta}+p(x)^{0.7} \right)
	\ge
	p(v)\nin^{-\delta} + |M_C^v|(\nin^{7\delta})^{0.7}
	\\&\ge
	p(v)\nin^{-\delta} + \nin^{4.9\delta}\frac{p(v)}{2\nin^{7\delta}}
	>
	p(v)\nin^{-\delta}+p(v)\nin^{-2.5\delta}
	>
	d(v)\nin^{-\delta}+(v)\nin^{-2.5\delta}
	\ge
	d'(v)
	\enspace.\qedhere
	\end{align*}
\end{proof}

We can therefore show that whenever we call \textsc{LowSpacePartition}, we maintain our invariant that all nodes have sufficient colors, i.e. $p'(v)>d'(v)$.

\begin{lemma}\label{lem:LSPaletteSize}
	Assume that at the start of our \textsc{LowSpacePartition}$(G)$ call, all nodes $v$ in $G$ have $p(v)>d(v)$. Then, after the call, we have $p'(v)>d'(v)$.
\end{lemma}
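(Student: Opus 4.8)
The statement to prove, Lemma \ref{lem:LSPaletteSize}, asserts that the invariant $p(v) > d(v)$ is preserved by a call to \textsc{LowSpacePartition}. The plan is to do a simple case analysis on which bin a node $v$ ends up in, in each case using Lemma \ref{lem:LSdeterministic-hashing} (which already gives us that all machines are good, hence the aggregated degree/palette estimates hold) together with the hypothesis $p(v) > d(v)$.

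First I would dispose of the case $v \in G_0$: here $v$'s palette and neighborhood are not modified at all by \textsc{LowSpacePartition} (only nodes outside $G_0$ get rehashed and have palettes restricted), so $p'(v) = p(v) > d(v) = d'(v)$ trivially. Next, for $v \in G_i$ with $1 \le i \le \nin^{\delta}-1$, both the degree and the palette get thinned by the hashing, and the second bullet of Lemma \ref{lem:LSdeterministic-hashing} gives exactly $d'(v) < p'(v)$, so we are done immediately — this is the crux case and it has already been handled inside the proof of Lemma \ref{lem:LSdeterministic-hashing}. The remaining case is $v \in G_{\nin^{\delta}}$, the final bin, whose palette is \emph{not} restricted by $h_2$ but instead gets colors removed when neighbors in earlier bins are colored by \textsc{LowSpaceColorReduce}. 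Here I would argue as in Lemma \ref{lemma:properties-of-Partition}(iii) for the analogous final-bin case: each neighbor of $v$ that lies outside bin $\nin^{\delta}$ removes at most one color from $v$'s palette, while $v$'s within-bin degree is $d'(v)$; so $p'(v) \ge p(v) - (d(v) - d'(v))$, and combining with $p(v) > d(v)$ yields $p'(v) > d(v) - (d(v) - d'(v)) = d'(v)$.

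The main obstacle — really the only subtlety — is making sure the bookkeeping in the final-bin case is airtight: one must be careful that "colors removed from $v$'s palette" are exactly (at most) one per already-colored neighbor not in bin $\nin^{\delta}$, that $d'(v)$ counts precisely the neighbors that \emph{remain} in $v$'s bin, and that no color is removed on account of a same-bin neighbor (those are colored recursively within the bin, after the palette update defining $p'(v)$). Since the structure mirrors Definition \ref{def:good-and-bad} and Lemma \ref{lemma:properties-of-Partition} for the linear-space algorithm, and since the quantitative slack there ($\ell^{0.7}$ terms) is not even needed here — we only need the strict inequality $p(v) > d(v)$ to survive — the argument is short. I would present it as the three-case split above, citing Lemma \ref{lem:LSdeterministic-hashing} for the middle case and a one-line counting argument for the last, and conclude that in all cases $p'(v) > d'(v)$, establishing the invariant.
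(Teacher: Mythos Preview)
Your proposal is correct and takes essentially the same approach as the paper: the paper simply collapses your cases 1 and 3 into the single case $v \in G_0 \cup G_{\nin^\delta}$, handled by the one-line observation that the palette is not restricted by partitioning and colors are only removed when used by neighbors (precisely your case-3 counting argument). One minor slip: for $v \in G_0$ you write $d'(v) = d(v)$, but $G_0$ is the graph \emph{induced} by the low-degree nodes, so in fact $d'(v) \le d(v)$ (some neighbors of $v$ may have high degree and land in other bins); this only strengthens the inequality, so the conclusion is unaffected.
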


\begin{proof}
	By Lemma \ref{lem:LSdeterministic-hashing}, the claim holds for all $v\notin G_0\cup G_{\nin^{\delta}}$. For a node $v \in G_0\cup G_{\nin^{\delta}}$, $v$'s palette is not restricted by partitioning and colors are only subsequently removed when used by $v$'s neighbors, so it follows that $p'(v)>d'(v)$.
\end{proof}

Now we are ready to prove Theorem \ref{thm:MPC-coloring}, which states that Algorithm \ref{alg:LSColorReduce} applied to the input graph \Gin\ performs $(\deg+1)$-list coloring in $O(\log \Delta + \log\log\nin)$ rounds of \MPC, with $O(\nin^\eps)$ space per machine and $O(\nim+\nin^{1+\eps})$ total space.

\begin{proof}[Proof of Theorem \ref{thm:MPC-coloring}]
	Each call of \textsc{LowSpaceColorReduce} makes two sequential sets of parallel recursive calls, on instances with maximum degree a factor of $\frac12 \nin^{\delta}$ lower than input maximum degree, by Lemma \ref{lem:LSdeterministic-hashing}. (It also makes a call to the MIS reduction taking $O(\log \Delta + \log\log\nin)$ rounds.) So, after a recursion depth of $\log_{\frac12 \nin^{\delta}} \Delta = O(1)$, the maximum degree is at most $O(\nin^{7\delta})$, and no further recursive calls are made. Since the depth of the recursion tree is $O(1)$, so is the number of \textsc{LowSpaceColorReduce} calls performed sequentially. So, the total round complexity is asymptotically equivalent to the complexity of a single \textsc{LowSpaceColorReduce} call, which is dominated by the $O(\log \Delta + \log\log\nin)$ rounds of the MIS reduction.
	
	Calls to \textsc{LowSpacePartition} use $O(\nin^{7\delta})$ local space, and all concurrent calls use $O(\nin+\nim)$ global space in total, since we never store more than $O(1)$ copies of any node, edge or palette entry.
	
	The global space complexity is dominated by the calls to MIS reduction, which as mentioned requires $O(\nin^\delta)$ and $O(\hat\nin^{1+\delta}\nin^{21\delta})$ local and global space respectively, where $\hat\nin$ is the number of nodes in the graph on which the call is made. We make many concurrent calls to the MIS reduction, but each node of \Gin~ is part of only one MIS reduction call. Therefore, the total space requirements for all concurrent calls are at most $O(\nin^{\delta})$ local space and $O(\nin^{1+22\delta})$ global space. Setting $\delta\le\frac{\epsilon}{22}$, this is $O(\nin^{\eps})$ and $O(\nin^{1+\eps})$ respectively.
\end{proof}

\section{Conclusions}
In this paper we have presented a constant-round deterministic $(\Delta+1)$-list coloring algorithm for the \congc and linear-space \MPC models, which not only greatly improves the deterministic complexity of the problem to optimality, but is also significantly simpler than the recent constant-round \emph{randomized} algorithm \cite{chang2019complexity}. Our algorithm relies on the derandomization of a recursive graph and palette sparsification procedure. We also extended this approach to low-space \MPC; here the $O(\log\Delta + \log\log n)$ round complexity arises from the fact that we cannot collect instances onto single machines, and instead reduce to the problem to MIS and apply an existing algorithm once degree is sufficiently low.

With this work we settle the deterministic complexity of coloring in \congc; the complexity in low-space \MPC, however, remains open. A corresponding constant-round algorithm there seems unlikely, due to the $\Omega(\log\log\log\nin)$ conditional lower bound \cite{GKU19}, but there is no reason to believe one could not improve the dependency on $\Delta$ to sub-logarithmic. Another area for possible improvement is our $O(\nim+\nin^{1+\eps})$ global space bound, which is slightly weaker than the optimal $O(\nim+\nin)$.


\newcommand{\Proc}{Proceedings of the~}
\newcommand{\DISC}{International Symposium on Distributed Computing (DISC)}
\newcommand{\FOCS}{IEEE Symposium on Foundations of Computer Science (FOCS)}
\newcommand{\ICALP}{Annual International Colloquium on Automata, Languages and Programming (ICALP)}
\newcommand{\IPCO}{International Integer Programming and Combinatorial Optimization Conference (IPCO)}
\newcommand{\ISAAC}{International Symposium on Algorithms and Computation (ISAAC)}
\newcommand{\JACM}{Journal of the ACM}
\newcommand{\NIPS}{Conference on Neural Information Processing Systems (NeurIPS)}
\newcommand{\OSDI}{Conference on Symposium on Opearting Systems Design \& Implementation (OSDI)}
\newcommand{\PODS}{ACM SIGMOD Symposium on Principles of Database Systems (PODS)}
\newcommand{\PODC}{ACM Symposium on Principles of Distributed Computing (PODC)}
\newcommand{\SICOMP}{SIAM Journal on Computing}
\newcommand{\SIROCCO}{International Colloquium on Structural Information and Communication Complexity}
\newcommand{\SODA}{Annual ACM-SIAM Symposium on Discrete Algorithms (SODA)}
\newcommand{\SPAA}{Annual ACM Symposium on Parallel Algorithms and Architectures (SPAA)}
\newcommand{\STACS}{Annual Symposium on Theoretical Aspects of Computer Science (STACS)}
\newcommand{\STOC}{Annual ACM Symposium on Theory of Computing (STOC)}


\bibliographystyle{plain}
\bibliography{ColoringArxiv}


\end{document}